\newcommand{\rem}[1]{}
\newtheorem{proposition}{Proposition}
\newtheorem{definition}{Definition}
\newtheorem{theorem}{Theorem}
\newtheorem{example}{Example}
\newtheorem{remark}{Remark}
\newtheorem{assumption}{Assumption}
\newenvironment{proof}[1][Proof]{\begin{trivlist}
\item[\hskip \labelsep {\bfseries #1}]}{\end{trivlist}}
\newcommand{\qed}{\nobreak \ifvmode \relax \else
      \ifdim\lastskip<1.5em \hskip-\lastskip
      \hskip1.5em plus0em minus0.5em \fi \nobreak
      \vrule height0.75em width0.5em depth0.25em\fi}
\newcommand{\mysubeq}[2]{
\begin{subequations}\label{#1}
\begin{align}
#2
\end{align}\end{subequations}}
\begin{document}
%
% paper title
% Titles are generally capitalized except for words such as a, an, and, as,
% at, but, by, for, in, nor, of, on, or, the, to and up, which are usually
% not capitalized unless they are the first or last word of the title.
% Linebreaks \\ can be used within to get better formatting as desired.
% Do not put math or special symbols in the title.
\title{Approximating Flexibility in Distributed Energy Resources: A Geometric Approach}

%% To specify the authors when (number of affiliations <= 2)
\author{
\IEEEauthorblockN{Soumya Kundu, Karanjit Kalsi}
\IEEEauthorblockA{Optimization and Control Group\\ 
Pacific Northwest National Laboratory\\ 
Richland, WA 99352 USA\\
Email: \{soumya.kundu, karanjit.kalsi\}@pnnl.gov}
%\and
%\IEEEauthorblockN{Andrey Bernstein}
%\IEEEauthorblockA{Energy Systems Integration Facility\\ 
%National Renewable Energy Laboratory\\ 
%Golden, CO 80401 USA\\ 
%andrey.bernstein@nrel.gov}
\and
\IEEEauthorblockN{Scott Backhaus}
\IEEEauthorblockA{Information Systems and Modeling Group\\ 
Los Alamos National Laboratory\\ 
Los Alamos, NM 87545 USA\\ 
Email: backhaus@lanl.gov}
}

%% To specify the authors when (number of affiliations > 2)
% \author{\IEEEauthorblockN{Soumya Kundu\IEEEauthorrefmark{1},
% Karanjit Kalsi\IEEEauthorrefmark{1} and
% Scott Backhaus\IEEEauthorrefmark{2}}
% \IEEEauthorblockA{\IEEEauthorrefmark{1} Optimization and Control Group\\
% Pacific Northwest National Laboratory,
% Richland, WA 99352 USA\\ \{soumya.kundu, karanjit.kalsi\}@pnnl.gov}
% \IEEEauthorblockA{\IEEEauthorrefmark{2} Information Systems and Modeling Group\\
% Los Alamos National Laboratory,
% Los Alamos, NM 87545 USA\\ backhaus@lanl.gov}
% }

% make the title area
\maketitle

% As a general rule, do not put math, special symbols or citations
% in the abstract
\begin{abstract}
With increasing availability of communication and control infrastructure at the distribution systems, it is expected that the distributed energy resources (DERs) will take an active part in future power systems operations. One of the main challenges associated with integration of DERs in grid planning and control is in estimating the available flexibility in a collection of (heterogeneous) DERs, each of which may have local constraints that vary over time. In this work, we present a geometric approach for approximating the flexibility of a DER in modulating its active and reactive power consumption. The proposed method is agnostic about the type and model of the DERs, thereby facilitating a plug-and-play approach, and allows scalable aggregation of the flexibility of a collection of (heterogeneous) DERs at the distributed system level. Simulation results are presented to demonstrate the performance of the proposed method.
\end{abstract}

\begin{IEEEkeywords}
Demand response, load aggregation, Minkowski sum, polynomial optimization.
\end{IEEEkeywords}

% Use this to place sponsorships
\thanksto{This work was supported by the United States Department
of Energy under the Grid
Modernization Lab Consortium initiative.}

\section{Introduction}

{T}{raditionally} the bulk of the responsibility of maintaining the (real-time) balance between load and generation has rest upon the conventional generators, e.g. spinning reserves, which provide various forms of grid ancillary services such as inertia support, regulation and ramping. As the power systems transition towards a greener grid with larger penetration of renewable generation, a large fraction of which is expected to be \textit{distributed}, there is a general consensus that various forms of distributed energy resources (DERs), including flexible and responsive electrical loads, need to be coordinated and controlled in real-time to provide grid support. This requires an appropriate understanding of the loads behavior, including their physical models, at the details useful for real-time coordination. It is important that such models capture the necessary information, such as the associated dynamics (thermal dynamics of a residential air-conditioner), measure of the available flexibility (reactive power capacity of a grid-connected inverter), the end-user constraints (desired state-of-charge of an electric vehicle battery), etc., while being tractable for real-time operation.

Modeling of an ensemble of flexible loads for ancillary services (in particular, frequency regulation and ramping) have been explored in the literature in recent years \cite{Callaway:2009, Kundu:2012CDC, Perfumo:2012, Mathieu:2013, Zhang:2013, Hao:2015, Sanandaji:16, Zhao:2016}. The methods proposed in these articles are applicable to ensembles of similar loads (albeit with heterogeneous parameters), such as a collection of residential air-conditioners, or a collection of plug-in electric vehicles. Such types of aggregate flexibility models are suitable for a transmission system operator which views the net demand flexibility available at the distribution-level as a lumped-model. Aggregation at this level (tens of thousands of loads) do not explicitly take into account the operational constraints (line-flow and voltage limits) at the mid/low-voltage distribution systems. As the fraction of flexible loads increase, however, chances of violation of operational constraints due to control of flexible demand will increase. 

In order to efficiently coordinate tens of thousands of flexible loads in distribution systems, while also satisfying line-flow and node voltage constraints, hierarchical modeling and control frameworks \cite{Callaway:2011,Bernstein:2015} become attractive. Load aggregators, referred to as \textit{aggregate device controller} (or ADC), help in keeping the size of control problem tractable by aggregating the neighboring DERs locally. This aggregation can be done at the level of a couple of service transformers (tens of residential customers). It is the responsibility of the ADC to capture the aggregated flexibility, in terms of active and reactive power, of the local DERs which are likely to be of different types (e.g. a collection of air-conditioners, electric water-heaters, batteries, solar photovoltaic inverers wind inverters) and ratings. 

The rest of the paper is organized as follows. In Section\,\ref{S:problem} we describe the problem of aggregating flexibility of heterogeneous DERs. Section\,\ref{S:back} presents some key concepts that form the basis of our work. In Section\,\ref{S:method} we discuss in details the geometric programming approach to the aggregation of flexibility, while exploring the metrics of quality of approximation in Section\,\ref{S:metric}. Numerical results illustrating the concept are presented in Section\,\ref{S:result}, before concluding the article in Section\,\ref{S:concl}.

%================================
\section{Problem Description}\label{S:problem}
%================================

Consider a hierarchical distribution system operation framework in which a distribution system operator (DSO) communicates with aggregate device controllers (ADCs) at every control period and solves an optimal dispatch problem to schedule the DERs. It is assumed that the ADCs are so placed that the operational constraints (line-flow and voltage limits) are trivially satisfied at the ADC-level. The primary responsibilities of the ADC are two-fold: 1) before the start of each control period (5-15\,min), the ADC estimates the net flexibility (in active and reactive power) available among the DERs under its control, and exchanges that information to the DSO; 2) during the control period, the ADC coordinates the DERs in real-time to track the active and reactive power set-points dispatched by the DSO. In this paper we restrict our discussion only on the first part, i.e. the problem of estimating the net flexibility in a group of (dissimilar) DERs. 

\begin{figure*}[thpb]
\centering
\captionsetup{justification=centering}
\subfigure[batteries]{
\includegraphics[scale=0.35]{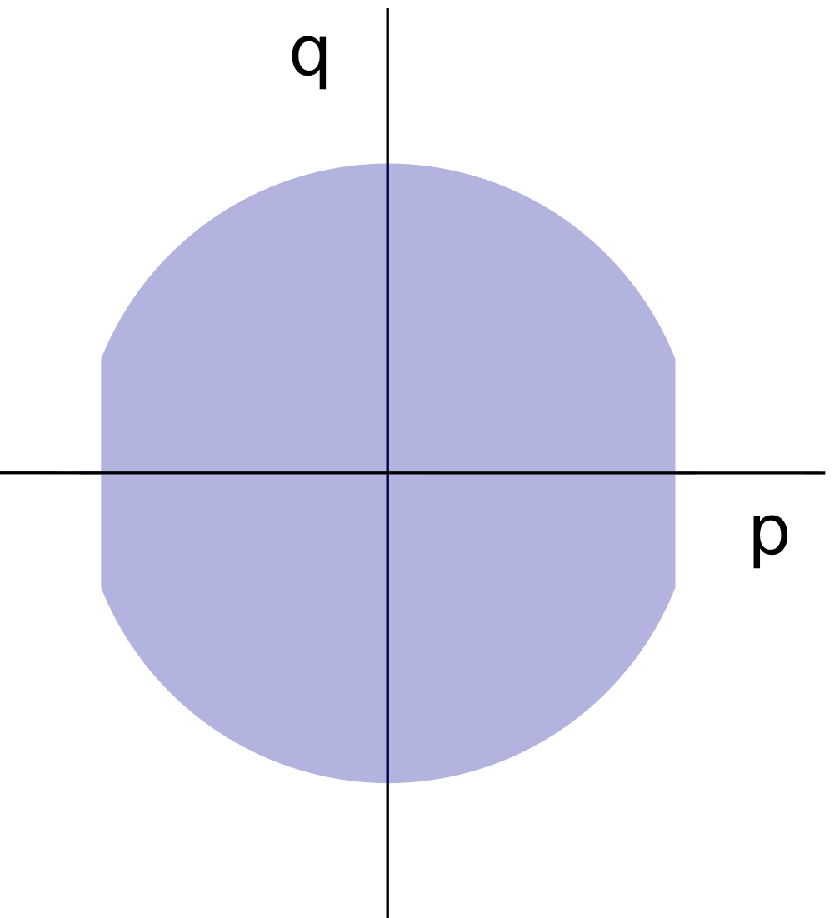}\label{F:battery}
}
\hspace{0.1in}
\subfigure[switching loads]{
\includegraphics[scale=0.35]{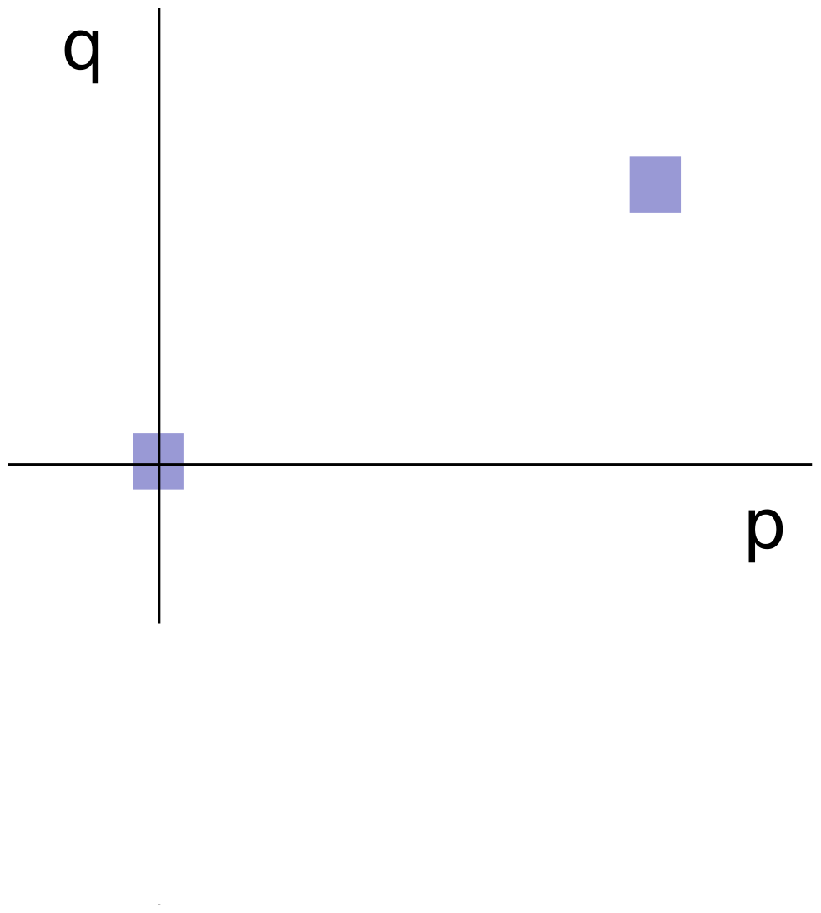}\label{F:hvac}
}
\hspace{0.1in}
\subfigure[PV inverters]{
\includegraphics[scale=0.35]{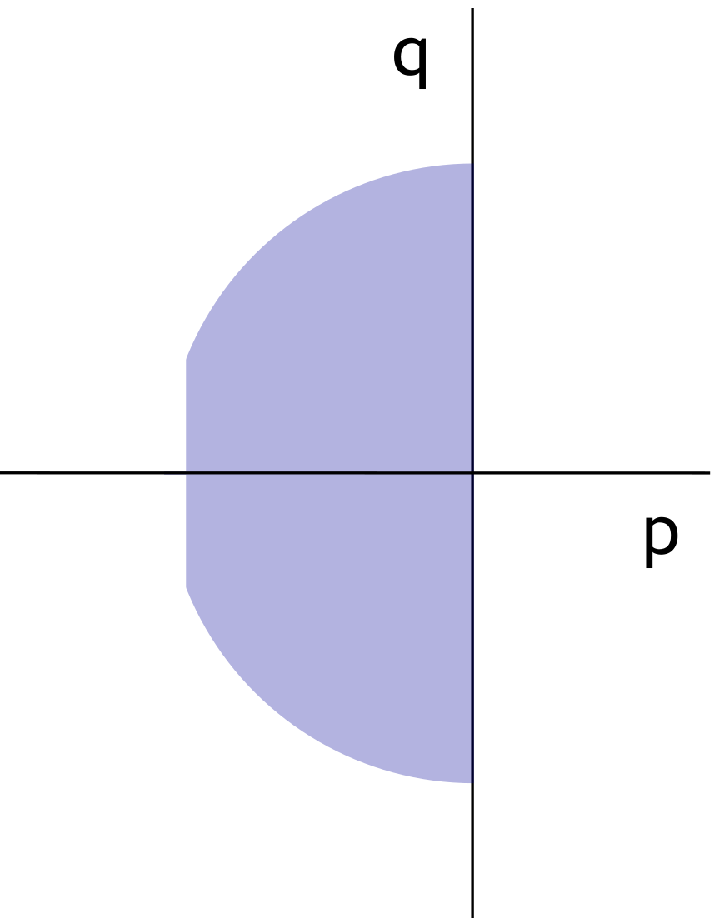}\label{F:pv}
}
\hspace{0.1in}
\subfigure[wind inverters]{
\includegraphics[scale=0.35]{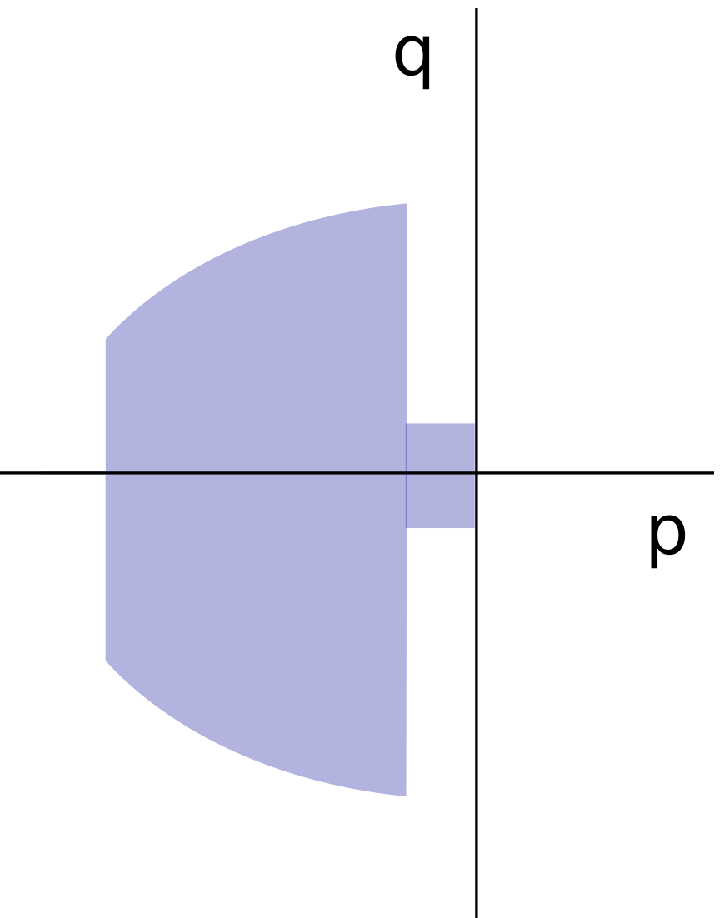}\label{F:wind}
}
\caption[Optional caption for list of figures]{Representation of continuous and discrete flexibility domains for an individual DER of certain types.}
\label{F:domains}
\end{figure*}
For the purpose of this article, any energy (consuming or generating) resource which offers certain flexibility in active ($p$) and/or reactive ($q$) power, possibly over a reasonably short time window (such as a 5-15\,min long control period), is considered as a DER. We use the notation $\mathcal{F}$ to represent the flexibility domain as a collection of $(p,q)$-points that are physically admissible by the DER (possibly via some local device-level control). Note that the flexibility domain ($\mathcal{F}$) could be a continuous or a discrete domain. For example, a solar photovoltaic (PV) inverter that can modulate its active and reactive power over a continuous range will have a continuous flexibility domain, while the flexibility domain for switching loads, such as an air-conditioner or electric water-heater, will be discrete. Moreover this flexibility is time-varying, and depends on exogenous parameters as well as end-user preferences. 

Fig.\,\ref{F:domains} shows examples of flexibility domains for certain types of DERs, with positive (negative) values of $p$ and $q$ denoting consumption (generation). Discrete flexibility domain of a switching load (e.g. air-conditioner) that operates in two discrete operational states (`on' and `off') is shown in Fig.\,\ref{F:hvac}, while the rest of the plots represent continuous flexibility domains. Batteries (Fig.\,\ref{F:battery} offer full four-quadrant flexibility, while PV (Fig.\,\ref{F:pv}) and wind (Fig.\,\ref{F:wind}) inverters offer flexibility only on the left half-plane (active power generation). It must be noted that the flexibility offered by the DERs is dynamic, and change based on end-usage and exogenous influence. For example, if there is a cloudy sky, the PV inverter output might only be restricted to a small fraction of its rated generation. Similarly, the air-conditioner may be forced to operate mostly in `on'-state, if the outside air-temperature is high. Finally, the flexibility domains should be able to capture DER uncertainties, which could be modeled in the robust sense via a conservative estimate of the available flexibility.

In this article, we will focus our discussion on the challenges of aggregating the flexibility of a heterogeneous mix of DERs at the ADC, while the issues regarding the dynamic evolution of flexibility and the associated uncertainties will be addressed in future work. Let us assume that there are $N$ DERs with an ADC. The flexibility domain of the $i$-th DER is denoted by $\mathcal{F}_i$\,, such that its active and reactive power consumption (with negative value signifying net generation)
\begin{align*}
(p_i,q_i)\in\mathcal{F}_i\,,\quad i\in\lbrace 1,2,\dots,N\rbrace\,.
\end{align*}
The goal of the flexibility aggregation task is to find the net flexibility domain $\mathcal{F}$ in the form of a Minkowski  sum of the individual DER flexibility domains ($\mathcal{F}_i$), such that,
\begin{align*}
%\left(\sum_{i=1}^Np_i\,,\,\sum_{i=1}^Nq_i\right)\in\mathcal{F}&:=\biguplus_{i=1}^N\mathcal{F}_i\\
%&=\left\lbrace (p,q)\,\left| \begin{array}{c} p=\sum_{i=1}^Np_i\\
%q=\sum_{i=1}^Nq_i\\
%(p_i,q_i)\in\mathcal{F}_i~\forall i\end{array}\right.\right\rbrace
\mathcal{F}&:=\biguplus_{i=1}^N\mathcal{F}_i=\left\lbrace (p,q)\,\left| \begin{array}{c} p=\sum_{i=1}^Np_i\\
q=\sum_{i=1}^Nq_i\\
%\text{where }
\,(p_i,q_i)\in\mathcal{F}_i~\forall i\end{array}\right.\right\rbrace.
\end{align*}
Calculating the exact Minkowski sum of the individual flexibility domains is computationally complex, especially as the number of DERs increases. Thus, from a practical point-of-view, a desirable approach is to construct approximations of the aggregate flexibility domain at the ADC in a scalable way. In the rest of the paper, we will discuss methods of computing the approximated aggregate flexibility at the ADC using geometric optimization procedures.

%=========================================
\section{Background}\label{S:back}
%=========================================

%=========================================
\subsection{Homothetic Transformation}
%=========================================
In a recent work, \cite{Zhao:2016}, authors used a geometric optimization-based approach to compute the \textit{inner} and \textit{outer} polytopic approximation of the aggregated flexibility of an ensemble of thermostatically-controlled loads, in the two-dimensional space of control variable and active power consumption. We propose to apply a similar approach to approximate the aggregated flexibility ($\mathcal{F}_i$) in active and reactive power consumption. Before we explain the approach, let us first introduce the concept of \textit{homothets}.

\begin{definition}
\cite{Zhao:2016} A \textit{homothet} of a compact convex domain $\mathcal{F}\subset\mathbb{R}^n$ is defined as the family of domains which can be expressed as $\mathcal{H}[\alpha,\beta;\mathcal{F}]:=\alpha\,\mathcal{F}+\beta$, where $\alpha$ is a positive scalar and $\beta\in\mathbb{R}^n$. Henceforth, the scalar $\alpha$ will be referred to as the \textit{`scaling factor'} and $\beta$ as the \textit{`translational vector'}.
\end{definition} 

Thus a \textit{homothetic} transformation is a transformation applied on the state-space ($\mathbb{R}^n$) using uniform scaling and translation. The following result is useful in computing the inner and outer approximations of a set of homothets of a given domain (extension of the following result for more than two homothets follows trivially),

\begin{theorem}\label{T:Minkowski}
\cite{Schneider:2013} Consider a compact convex domain $\mathcal{F}\subset\mathbb{R}^n$\,, and two of its homothets $\mathcal{H}[\alpha_i,\beta_i\;\mathcal{F}]$\,, $\alpha_i>0$\,, $\beta_i\in\mathbb{R}^n$ $\forall i\in\lbrace 1,2\rbrace\,$. The Minkowski sum of the homothets is given by 
$\mathcal{H}[\alpha,\beta;\mathcal{F}]=\biguplus_{i}^{2}\mathcal{H}[\alpha_i,\beta_i;\mathcal{F}]$\,, with the \textit{scaling factor} as $\alpha=\alpha_1+\alpha_2$ and the \textit{translational vector} as $\beta=\beta_1+\beta_2$\,.
\end{theorem}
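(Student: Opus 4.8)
The plan is to reduce the statement to the single algebraic identity $\alpha_1\mathcal{F} + \alpha_2\mathcal{F} = (\alpha_1+\alpha_2)\,\mathcal{F}$ for a convex set $\mathcal{F}$ and nonnegative scalars, and then unwind the definitions of the homothets and of the Minkowski sum. Concretely, I would first expand
\begin{align*}
\mathcal{H}[\alpha_1,\beta_1;\mathcal{F}]\uplus\mathcal{H}[\alpha_2,\beta_2;\mathcal{F}]
&=(\alpha_1\mathcal{F}+\beta_1)\uplus(\alpha_2\mathcal{F}+\beta_2)\\
&=\alpha_1\mathcal{F}+\alpha_2\mathcal{F}+(\beta_1+\beta_2),
\end{align*}
using only that translation commutes with the Minkowski sum and that $(A+b_1)\uplus(B+b_2) = (A\uplus B)+(b_1+b_2)$, which is immediate from the coordinatewise description of $\uplus$. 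The theorem then follows once we show $\alpha_1\mathcal{F}+\alpha_2\mathcal{F}=(\alpha_1+\alpha_2)\mathcal{F}$, since the right-hand side is by definition $\mathcal{H}[\alpha_1+\alpha_2,\beta_1+\beta_2;\mathcal{F}]$.

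For the identity itself I would argue by double inclusion. The inclusion $(\alpha_1+\alpha_2)\mathcal{F}\subseteq\alpha_1\mathcal{F}+\alpha_2\mathcal{F}$ holds for an arbitrary set: if $x\in\mathcal{F}$ then $(\alpha_1+\alpha_2)x=\alpha_1 x+\alpha_2 x\in\alpha_1\mathcal{F}+\alpha_2\mathcal{F}$. The reverse inclusion is where convexity enters and is the crux of the argument: given $x,y\in\mathcal{F}$, write $\lambda=\alpha_1/(\alpha_1+\alpha_2)\in[0,1]$ (here $\alpha_1+\alpha_2>0$ because both $\alpha_i>0$), so that
\begin{align*}
\alpha_1 x+\alpha_2 y=(\alpha_1+\alpha_2)\bigl(\lambda x+(1-\lambda)y\bigr)\in(\alpha_1+\alpha_2)\mathcal{F},
\end{align*}
the membership following from convexity of $\mathcal{F}$. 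This establishes $\alpha_1\mathcal{F}+\alpha_2\mathcal{F}\subseteq(\alpha_1+\alpha_2)\mathcal{F}$ and hence equality.

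I expect the main (and only real) obstacle to be stating the convexity step cleanly — in particular being careful that $\alpha_1,\alpha_2>0$ so the convex combination is well-defined and that the identity would fail for a nonconvex $\mathcal{F}$ (e.g. $\mathcal{F}$ a pair of points), which is precisely why the hypothesis of convexity is needed. Compactness of $\mathcal{F}$ plays no role in this particular identity; it is carried along only because the downstream approximation results require $\mathcal{H}[\alpha,\beta;\mathcal{F}]$ to again be a compact convex set, which is preserved under positive scaling and translation. Finally, I would remark that the extension to $N>2$ homothets follows by a trivial induction on $N$, giving $\alpha=\sum_i\alpha_i$ and $\beta=\sum_i\beta_i$, as claimed in the parenthetical remark preceding the statement.
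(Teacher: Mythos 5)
Your proof is correct. Note, however, that the paper offers no proof of this theorem at all --- it is quoted as a known result from Schneider's monograph on convex bodies --- so there is no in-paper argument to compare against. Your argument is the standard one: reduce to the distributivity identity $\alpha_1\mathcal{F}+\alpha_2\mathcal{F}=(\alpha_1+\alpha_2)\mathcal{F}$ via the translation-commutes-with-Minkowski-sum observation, prove the easy inclusion for arbitrary sets, and use convexity (via the convex combination with $\lambda=\alpha_1/(\alpha_1+\alpha_2)$) for the reverse inclusion. Your side remarks are also accurate: convexity is genuinely needed (the identity fails for a two-point set), compactness is irrelevant to the identity itself, and the $N$-fold extension used in Proposition~\ref{prop:aggr} follows by induction. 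The argument is complete as written.
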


%=========================================
\subsection{Sum-of-Squares Programming}
%=========================================

\begin{definition}
Any n-variate polynomial that can be expressed as sum of squared polynomials, is called a sum-of-squares (SOS) polynomial. We denote the ring of all SOS polynomials in $x\in\mathbb{R}^n$ by $\Sigma[x]\,.$
\end{definition}

Positivstellensatz theorem allows one to translate a set of semi-algebraic constraints into SOS feasibility conditions. In particular, Putinar's Positivstellensatz theorem says:
\begin{theorem}\label{T:Putinar}
\cite{Putinar:1993,Lasserre:2009} Consider a compact domain $\mathcal{K}\!=\!\!\left\lbrace x\!\in\!\mathbb{R}^n\!\left|\, g_i(x)\!\geq\! 0\,\forall i\!=\!1,\dots,m\right.\right\rbrace$, where $g_i(x)\,\forall i$ are polynomials and $g_i(\cdot)\geq 0\,\forall i$ define compact domains. Then a polynomial $f(x)$ is positive on $\mathcal{K}$ if and only if there exist SOS polynomials $\sigma_i(x)$ such that
$f(x) - \sum_i \sigma_i(x)g_i(x)$ is SOS.
\end{theorem}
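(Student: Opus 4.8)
The plan is to prove the two implications separately, since the ``if'' direction is essentially free and the ``only if'' direction carries all the content. It is convenient to package the conclusion in terms of the \emph{quadratic module} generated by the constraints, $M := \Sigma[x] + g_1\,\Sigma[x] + \dots + g_m\,\Sigma[x]$, so that the assertion ``$f - \sum_i \sigma_i g_i$ is SOS for some $\sigma_i \in \Sigma[x]$'' reads simply ``$f \in M$''. I would also make explicit the standing hypothesis that $M$ is \emph{Archimedean}, i.e. that $N - \sum_{j=1}^n x_j^2 \in M$ for some $N>0$; this is marginally stronger than plain compactness of $\mathcal{K}$, but it costs nothing, because for compact $\mathcal{K}$ one may always append the redundant constraint $g_0(x) := N - \sum_j x_j^2 \ge 0$ (valid on $\mathcal{K}$ for large $N$) to the list, whereupon $g_0 \in M$ makes $M$ Archimedean.

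For the easy direction, suppose $f - \sum_i \sigma_i g_i = \sigma_0 \in \Sigma[x]$, so that $f = \sigma_0 + \sum_i \sigma_i g_i$. Evaluating at any $x \in \mathcal{K}$, each summand $\sigma_i(x) g_i(x)$ is nonnegative (a sum of squares times a nonnegative number) and $\sigma_0(x) \ge 0$, hence $f(x) \ge 0$. Strictly speaking the representation alone yields only nonnegativity on $\mathcal{K}$, not strict positivity; the content of the theorem, and the direction relevant to the polynomial-optimization arguments used later in this paper, is the converse.

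For the hard direction, assume $f > 0$ on $\mathcal{K}$ and, arguing by contradiction, that $f \notin M$. View $\mathbb{R}[x]$ as an ordered vector space with positive cone $M$; because $M$ is Archimedean, the constant polynomial $1$ is an order unit. Since $f$ lies outside $M$, a Hahn--Banach / Riesz-extension separation argument for ordered vector spaces with an order unit produces a linear functional $L : \mathbb{R}[x] \to \mathbb{R}$ with $L(1) = 1$, $L \ge 0$ on $M$, and $L(f) \le 0$. From $\Sigma[x] \subseteq M$ we get $L(p^2) \ge 0$ for every $p$, and from $(N - \sum_j x_j^2)\,p^2 \in M$ we get the Archimedean growth bound $L\big((N - \sum_j x_j^2)\,p^2\big) \ge 0$. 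These positivity and boundedness conditions are exactly the hypotheses of a multivariate Haviland-type representation theorem (equivalently, one runs the GNS construction and applies the spectral theorem to the commuting, bounded, self-adjoint multiplication operators $p \mapsto x_j\,p$): there exists a positive Borel measure $\mu$ with $L(p) = \int p\,d\mu$ for all $p$. The inequalities $\int g_i p^2 \, d\mu = L(g_i p^2) \ge 0$ for all $p$ then force $g_i \ge 0$ $\mu$-almost everywhere, i.e. $\operatorname{supp}(\mu) \subseteq \mathcal{K}$. Finally, $\mu(\mathbb{R}^n) = L(1) = 1$, so $\mu \neq 0$, yet $0 \ge L(f) = \int_{\mathcal{K}} f\,d\mu > 0$ since $f$ is strictly positive on the nonempty support of $\mu$ --- a contradiction. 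Hence $f \in M$, which is the claimed SOS certificate.

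The step I expect to be the genuine obstacle is the passage from the abstract functional $L$ to a representing measure $\mu$ together with the localization $\operatorname{supp}(\mu)\subseteq\mathcal{K}$: the separation step is soft, the easy direction is bookkeeping, but converting $L$ into an honest integral requires the full moment-problem machinery (Riesz representation plus the spectral theorem, or Haviland's theorem), and it is there that the Archimedean condition and the inclusions $g_i\,\Sigma[x] \subseteq M$ are used in an essential way.
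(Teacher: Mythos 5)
The paper offers no proof of this statement: Theorem~\ref{T:Putinar} is imported verbatim from \cite{Putinar:1993,Lasserre:2009}, so there is no internal argument to compare yours against. Your sketch is the standard proof of Putinar's Positivstellensatz --- Eidelheit/M.~Riesz separation of $f$ from the Archimedean quadratic module $M$, a GNS construction whose multiplication operators are bounded and self-adjoint thanks to the Archimedean bound, a representing measure supported on $\mathcal{K}$, and the resulting contradiction with $f>0$ on $\mathcal{K}$ --- and at the level of a sketch it is sound. You are also right to flag that the converse direction literally yields only $f\ge 0$ on $\mathcal{K}$, so the ``if and only if'' in the statement is the usual slight abuse.

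One point needs repair: your handling of the Archimedean hypothesis. You claim it is obtainable ``for free'' from compactness of $\mathcal{K}$ by appending the redundant constraint $g_0 := N-\sum_j x_j^2$. That move enlarges the quadratic module to $M' = M + g_0\,\Sigma[x]$ and proves $f\in M'$, i.e.\ a representation containing an extra term $\sigma'\,g_0$ that is not of the form asserted in the theorem; to descend from $M'$ back to $M$ you would need $g_0\in M$, which is precisely the Archimedean condition you were trying to establish, and it is known to fail for some compact $\mathcal{K}$ (the Putinar representation genuinely requires more than compactness of the feasible set). The paper's hypothesis, however, is stronger than compactness of $\mathcal{K}$: it requires each individual superlevel set $\left\lbrace x\,\left|\,g_i(x)\ge 0\right.\right\rbrace$ to be compact. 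Under that hypothesis $M$ is Archimedean without any modification --- apply Schm\"udgen's Positivstellensatz to the single constraint $g_1$ to place $N-\sum_j x_j^2$ in $\Sigma[x]+g_1\,\Sigma[x]\subseteq M$ --- and the remainder of your argument then goes through as written.
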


For every SOS polynomial $f(x)$ of degree $2d$ ($d$ is a positive integer), there exists a positive semi-definite matrix $\Xi$ such that the Gramm matrix representation $f(x)\!=\!z(x)^T\Xi\,z(x)$ holds, where $z(x)$ is a vector of monomials in $x$ of degree less than or equal to $d$ \cite{Parrilo:2000}. Thus each SOS problem can be cast into an equivalent semidefinite programming (SDP) problem and solved via SOSTOOLS \cite{sostools13}, in conjunction with SDP solvers such as SeDuMi \cite{Sturm:1999}. Next we demonstrate the application of SOS programming to compute the approximations of flexibility domains using prototypes.

%==================================================================================
\section{Aggregate Flexibility: Geometric Approach}\label{S:method}
%==================================================================================

In this section we describe how one can use Theorem\,\ref{T:Minkowski} to design an algorithm to compute the outer and inner approximation of the aggregated flexibility without explicitly computing their Minkowski sum. The idea is to first define a \textit{`prototype'} domain $\mathcal{F}^0\subset\mathbb{R}^2$, either as a polytope (e.g. a \textit{box} constraint) or a generic convex domain, and obtain the \textit{outer} and \textit{inner} approximations of the individual flexibility domains ($\mathcal{F}_i\subset\mathbb{R}^2$) of the $i$-th ADC as the \textit{homothets} of the \textit{prototype}. For each $i\in\lbrace 1,2,\dots,N\rbrace$\,, we seek to find positive scalars $\underline{\alpha_i}\,,\,\overline{\alpha_i}$\,, and vectors $\underline{\beta_i}\,,\,\overline{\beta_i}\in\mathbb{R}^2$, such that
\begin{align}\label{E:indiv_approx}
\mathcal{H}\!\left[\underline{\alpha_i},\underline{\beta_i};\mathcal{F}^0\right]\,\subseteq\,\mathcal{F}_i \,\subseteq\, \mathcal{H}\!\left[\overline{\alpha_i},\overline{\beta_i};\mathcal{F}^0\right].
\end{align}
Computation of the outer and inner approximations would be discussed in details in the following sub-section. Note that,
\begin{proposition} \label{prop:aggr}
%If the flexibility domains $\mathcal{F}_i\subset\mathbb{R}^2 \, \forall j\in\lbrace 1,\dots,n_i\rbrace$ of the DERs at the $i$-th ADC are outer and inner approximated by \textit{homothets} of a compact convex \textit{prototype} domain $\mathcal{F}^0\subset\mathbb{R}^2$ as in \eqref{E:indiv_approx}, then the aggregate flexibility domain $\mathcal{F}_i\subset\mathbb{R}^2$ at the ADC-level is outer and inner approximated as follows,
If the flexibility domains of a set of DERs are outer and inner approximated by \textit{homothets} of a compact convex \textit{prototype} domain $\mathcal{F}^0\!\subset\!\mathbb{R}^2$ as in \eqref{E:indiv_approx}, then their aggregate flexibility domain $\mathcal{F}\!\subset\!\mathbb{R}^2$ is outer and inner approximated as follows,
\begin{subequations}\begin{align*}
&\mathcal{H}\!\left[\underline{\alpha},\underline{\beta};\mathcal{F}^0\right]\,\subseteq\,\mathcal{F} \,\subseteq\, \mathcal{H}\!\left[\overline{\alpha},\overline{\beta};\mathcal{F}^0\right]\,\\
\text{where }&\underline{\alpha}=\sum_{i=1}^N\underline{\alpha_i}\,,\,\underline{\beta}=\sum_{i=1}^N\underline{\beta_i}\,,\,\overline{\alpha}=\sum_{i=1}^N\overline{\alpha_i}\,,\,\overline{\beta}=\sum_{i=1}^N\overline{\beta_i}\,.
\end{align*}\end{subequations}
\end{proposition}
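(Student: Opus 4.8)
The plan is to prove the two inclusions claimed in Proposition~\ref{prop:aggr} by combining the elementary behaviour of Minkowski sums under the hypotheses \eqref{E:indiv_approx} with the homothet-addition rule of Theorem~\ref{T:Minkowski}. First I would record the two facts about Minkowski sums that I will lean on: (i) the sum is monotone with respect to set inclusion, i.e.\ if $\mathcal{A}_i\subseteq\mathcal{B}_i$ for all $i$ then $\biguplus_i\mathcal{A}_i\subseteq\biguplus_i\mathcal{B}_i$; and (ii) $\mathcal{F}=\biguplus_{i=1}^N\mathcal{F}_i$ by the definition of the aggregate flexibility domain given in Section~\ref{S:problem}. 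Fact (i) is immediate from the definition of $\biguplus$: a point $\sum_i x_i$ with $x_i\in\mathcal{A}_i$ is also a point $\sum_i x_i$ with $x_i\in\mathcal{B}_i$.

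Next I would chain these together. Applying the inner inclusions of \eqref{E:indiv_approx}, $\mathcal{H}[\underline{\alpha_i},\underline{\beta_i};\mathcal{F}^0]\subseteq\mathcal{F}_i$ for each $i$, and monotonicity (fact~(i)), we get
\begin{align*}
\biguplus_{i=1}^N\mathcal{H}\!\left[\underline{\alpha_i},\underline{\beta_i};\mathcal{F}^0\right]\,\subseteq\,\biguplus_{i=1}^N\mathcal{F}_i\,=\,\mathcal{F}.
\end{align*}
Symmetrically, the outer inclusions give $\mathcal{F}=\biguplus_i\mathcal{F}_i\subseteq\biguplus_i\mathcal{H}[\overline{\alpha_i},\overline{\beta_i};\mathcal{F}^0]$. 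It then remains only to evaluate the two Minkowski sums of homothets in closed form. Here I would invoke Theorem~\ref{T:Minkowski} (with its noted trivial extension from two to $N$ homothets, provable by an easy induction on $N$): since $\mathcal{F}^0$ is compact and convex and each $\underline{\alpha_i}>0$, $\underline{\beta_i}\in\mathbb{R}^2$, we have $\biguplus_{i=1}^N\mathcal{H}[\underline{\alpha_i},\underline{\beta_i};\mathcal{F}^0]=\mathcal{H}[\underline{\alpha},\underline{\beta};\mathcal{F}^0]$ with $\underline{\alpha}=\sum_i\underline{\alpha_i}$ and $\underline{\beta}=\sum_i\underline{\beta_i}$, and likewise for the overlined quantities. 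Substituting back yields exactly $\mathcal{H}[\underline{\alpha},\underline{\beta};\mathcal{F}^0]\subseteq\mathcal{F}\subseteq\mathcal{H}[\overline{\alpha},\overline{\beta};\mathcal{F}^0]$, which is the claim.

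The argument is essentially a two-line composition, so there is no deep obstacle; the only points deserving care are bookkeeping ones. One is making sure the extension of Theorem~\ref{T:Minkowski} from two homothets to $N$ is legitimate — this follows by induction, using that a homothet of a compact convex set is again compact and convex (so the inductive hypothesis applies at each step) and that scaling factors stay positive under addition. The other is the (very mild) point that $\mathcal{F}$ as defined is exactly the $N$-fold Minkowski sum, so that fact~(i) applies verbatim; no convexity or compactness of the individual $\mathcal{F}_i$ is needed for the inclusion chain itself, only for the final homothet-sum identity, where it is supplied by the prototype $\mathcal{F}^0$ being compact and convex. I would therefore present the proof as: (1) state monotonicity of $\biguplus$; (2) apply it to both inclusions in \eqref{E:indiv_approx}; (3) collapse the resulting homothet sums via Theorem~\ref{T:Minkowski}; (4) conclude.
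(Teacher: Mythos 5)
Your proposal is correct and follows the same route as the paper, which simply notes that the result ``follows trivially from Theorem~\ref{T:Minkowski}''; you have merely spelled out the two implicit ingredients (monotonicity of the Minkowski sum under set inclusion, and the inductive extension of the homothet-addition rule to $N$ summands). No gaps.
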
 
\begin{proof}
The proof follows trivially from Theorem\,\ref{T:Minkowski}.\hfill\qed
\end{proof}

Let us now discuss how to construct the \textit{outer} and \textit{inner} approximations of flexibility domains using \textit{prototypes}. In this paper, we consider the two types of flexibility domains: either 1) the flexibility region is a compact domain with a boundary defined by piece-wise polynomials, or 2) the flexibility domain is a collection of admissible points in the $(p,q)$-space. Note that, we allow non-convex representations of the flexibility domains. 
%We will first discuss how we can use \textit{sum-of-squares (SOS)} optimization (along with semi-definite programming) to approximate the compact flexibility domains defined by polynomials, and explain later how we can approximate the discrete flexibility regions.
%
%
%
In this paper, we consider flexibility domains that can be represented in the generic form of:
\begin{align}\label{E:flex_compact}
\mathcal{F}&=\bigcup_{k=1}^K\mathcal{F}^k\\
\text{where }~\mathcal{F}^k&=\left\lbrace (p,q)\left|\, g_1^k(p,q)\!\geq\! 0,\dots,\,g_{m^k}^k(p,q)\!\geq\! 0\right.\right\rbrace\!,\!
\end{align} 
where $g_j^k(p,q)\,\forall k\in\lbrace 1,2,\dots,m^k\rbrace\,\forall i$ are polynomials. This representation can be used to represent the continuous and discrete flexibility domains  of the types depicted in Fig.\,\ref{F:domains}, as explained in the following examples. 

\begin{example}
(\textsc{Batteries}) The flexibility domain of a battery, with a maximum charge/discharge rate of $p^{\max}$ and the apparent power rating of $s\!>\!p^{\max}$\,, is given by 
\begin{align*}
\mathcal{F}\!=\!\left\lbrace (p,q)\left|p\in[-p^{\max},{p}^{\max}],\,|q|\leq\sqrt{s^2-p^2}\right.\right\rbrace,
\end{align*} 
which could be expressed in the form of \eqref{E:flex_compact}, with $g_1(p,q)\!=\!s^2\!-\!p^2\!-\!q^2,$ and $g_2(p,q)\!=\!\left(p^{\max}\right)^2\!-\!p^2$. \hfill\qed\end{example} 

\begin{example}
(\textsc{PV Inverters}) The flexibility domain of a battery unit, with a maximum active power generation of $p^{\max}$ and the apparent power rating of $s\!>\!p^{\max}$\,, is given by
\begin{align*}
\mathcal{F}\!=\!\left\lbrace (p,q)\left|\,p\!\in\![-p^{\max},0],\, |q|\!\leq\! \sqrt{s^2\!-\!p^2}\right.\right\rbrace,
\end{align*} 
which could be expressed in the form of \eqref{E:flex_compact}, with $g_1(p,q)\!=\!s^2\!-\!p^2\!-\!q^2,$ and $g_2(p,q)\!=\!-\!p^{\max} p\!-\!p^2$. \hfill\qed\end{example} 

\begin{example}
(\textsc{Wind Inverters}) The flexibility domain of a wind inverter, with a maximum active power generation of $p^{\max}$ and the apparent power ratings $s_1\!>\!\sqrt{\alpha}p^{\max}$ (due to rotor current limits) and $s_2\!>\!\sqrt{\alpha}p^{\max}$ (due to stator current limits)\,, for some $\alpha>0$ is given by \cite{Lund:2007,Tian:2013,Martin:2015} 
\begin{align*}
\mathcal{F}&=\mathcal{F}^1\bigcup\mathcal{F}^2\bigcup\mathcal{F}^3\\
\mathcal{F}^1&=\left\lbrace (p,q)\left| \,p\in[-p^0,0]\,,\,q\in[-q^0,\,q^0]\right.\right\rbrace\\
\mathcal{F}^2&=\left\lbrace (p,q)\left|
p\in[-{p}^{\max},-p^0),\,0\leq q\leq\sqrt{s_2^2-\alpha p^2}\right.\right\rbrace\\
\mathcal{F}^2&=\left\lbrace (p,q)\left|
p\in[-{p}^{\max},-p^0),\,-\sqrt{s_1^2-\alpha p^2}\leq q\leq 0\right.\right\rbrace
\end{align*} 
where $p^0$ and $q^0$ are much smaller than the rated capacities. The flexibility can be expressed in the form of \eqref{E:flex_compact}, with 
\begin{align*}
g_1^1(p,q)&\!=\!-p^2\!-\!p\,p^0,~g_2^1(p,q)\!=\!(q^0)^2\!-\!q^2,\,\\
g_1^2(p,q)&\!=\!-p^2\!-\!p\,\left(p^0\!+\!p^{\max}\right)\!-\!p^0p^{\max},\,\\
g_2^2(p,q)&\!=\!q,\,g_3^2(p,q)\!=\!s_2^2\!-\!\alpha p^2\!-\!q^2,\,\\
g_1^3(p,q)&\!=\!g_1^2(p,q),\,g_2^3(p,q)\!=\!-q,\,g_3^3(p,q)\!=\!s_1^2\!-\!\alpha p^2\!-\!q^2.\qed
\end{align*}\end{example} 

\begin{example}\label{Ex:hvac}
(\textsc{Air-Conditioners}) The flexibility domain of a residential air-conditioner with an active power consumption rating of $p^{\max}$ (equal to the power consumed in `on' state) is represented by 
\begin{align*}
\mathcal{F}&=\mathcal{F}^1\bigcup\mathcal{F}^2\\
\mathcal{F}^1&=\left\lbrace (0,0)\right\rbrace,\,\mathcal{F}^2=\left\lbrace ({p}^{\max},\gamma {p}^{\max})\right\rbrace
\end{align*} 
where $\gamma>0$ is related to the power factor. One possible way of representing the flexibility in the form of \eqref{E:flex_compact} is by choosing $g_1^1(p,q)\!=\!-p^2,\,g_2^1(p,q)\!=\!-q^2,\,g_1^2(p,q)\!=\!-\left(p-p^{\max}\right)^2$ and $g_2^2(p,q)\!=\!-\left(q-\gamma p^{\max}\right)^2$\,. \hfill\qed\end{example}

In the scope of this work, we will focus our attention to \textit{prototype} domains that are convex polygons, expressed as
\begin{align}\label{E:proto}
\!\!\mathcal{F}^0\!=\!\left\lbrace (p,q)\left|\, A\begin{bmatrix}
p\\q
\end{bmatrix}\!\leq\!b\,,\,A\!=\!\!\begin{bmatrix}
a_{1p}&a_{1q}\\\vdots\\a_{np}&a_{nq}
\end{bmatrix}\!,\,b\!=\!\!\begin{bmatrix}
b_1\\\vdots\\b_n
\end{bmatrix}\! \right.\right\rbrace\!.\!
\end{align}

\begin{example}
(\textsc{Unit Square}) A prototype domain of the shape of a unit-square centered around the origin, $\mathcal{F}^0\!=\!\left\lbrace (p,q)\left|\,|p|\!\leq\!1,\, |q|\!\leq\! 1\right.\right\rbrace$, can be expressed in the form of \eqref{E:proto} with $A\!=\!\!\begin{bmatrix}
1&-1&0&0\\0&0&1&-1
\end{bmatrix}^T\!$, $b\!=\!\!\begin{bmatrix}
1&1&1&1
\end{bmatrix}^T\!$.\hfill\qed\end{example}

%---------------------------------------------------------------
\subsection{Constructing Outer Approximation}
%---------------------------------------------------------------

An outer approximation of the compact flexibility domain in \eqref{E:flex_compact}, using a \textit{homothet} of the prototype in \eqref{E:proto} amounts to finding positive scalar $\overline{\alpha}$ and 2-dimensional vector $\overline{\beta}\!=\!\begin{bmatrix}
\overline{\beta_p} &\overline{\beta_q}
\end{bmatrix}^T$ such that the following set inclusion condition is satisfied,
\begin{align}
\mathcal{H}\!\left[\overline{\alpha},\overline{\beta};\mathcal{F}^0\right]&\!=\!\left\lbrace (p,q)\left|\,\begin{bmatrix}
p\\q
\end{bmatrix}\!=\!\overline{\alpha}\!\begin{bmatrix}
p^0\\q^0
\end{bmatrix}\!+\!\!\begin{bmatrix}
\overline{\beta_p}\\\overline{\beta_q}
\end{bmatrix}\!,\,(p^0\!,q^0)\!\in\!\mathcal{F}^0\! \right.\right\rbrace\notag\\
&\!=\!\left\lbrace (p,q)\left|\,A\begin{bmatrix}
p\\q
\end{bmatrix}\!\leq\!\overline{\alpha}\,b\!+\!A\!\begin{bmatrix}
\overline{\beta_p}\\\overline{\beta_q}
\end{bmatrix}\! \right.\right\rbrace\!\supseteq\mathcal{F}.
\end{align}
This condition translates into a set of semi-algebraic conditions as follows:
\begin{align*}
\forall i\!\in\!\lbrace 1,\dots,n\rbrace:~a_{ip}\,(\overline{\beta_p}\!-\!p)+a_{iq}\,(\overline{\beta_q}\!-\!q)\!+\!\overline{\alpha}\,b_i\!\geq\!0\text{ on }\mathcal{F}.
\end{align*}
After relaxing the inequality `$\geq$' to strictly inequality `$>$' (note that for convex domains, for any $\alpha$ satisfying the inequality condition, we can always find an infinitesimally larger $\alpha$ that satisfies the strict inequality), we can use Theorem\,\ref{T:Putinar} to construct the following SOS optimization problem
\begin{align}\label{E:outer}
&\text{minimize}~\overline{\alpha}\!>\!0,\,\\
&\text{s.t., }\notag\\
&\overline{\beta_p}\!\in\!\mathbb{R},\,\overline{\beta_q}\!\in\!\mathbb{R},\notag\\
&\sigma_{ij}^k\!\in\!\Sigma[p,q]\,~\forall i\!\in\!\lbrace 1,\dots,n\rbrace,\,j\!\in\!\lbrace 1,\dots,m^k\rbrace,\,k\!\in\!\lbrace 1,\dots,K\rbrace\notag\\
&\forall k\!:\left\lbrace\!\! \begin{array}{r}\begin{bmatrix}
a_{1p} & a_{1q}
\end{bmatrix}\begin{bmatrix}
\overline{\beta_p}\!-\!p\\
\overline{\beta_q}\!-\!q
\end{bmatrix}\!+\!\overline{\alpha}\,b_1\!-\!\sum_{j=1}^{m^k}\sigma_{1j}\,g_j^k\in\Sigma[p,q]\\
\vdots\qquad\qquad \qquad\qquad \\
\begin{bmatrix}
a_{np} & a_{nq}
\end{bmatrix}\begin{bmatrix}
\overline{\beta_p}\!-\!p\\
\overline{\beta_q}\!-\!q
\end{bmatrix}\!+\!\overline{\alpha}\,b_n\!-\!\sum_{j=1}^{m^k}\sigma_{nj}\,g_j^k\in\Sigma[p,q]\end{array}\!\!\right.\notag
\end{align}
%\mysubeq{E:outer}{
%\textsc{fin}&\textsc{d }~\overline{\alpha}\!>\!0,\,\overline{\beta_p}\!\in\!\mathbb{R},\,\overline{\beta_q}\!\in\!\mathbb{R},\\
%&~\&~\sigma_{ij}\!\in\!\Sigma[p,q]\,\forall i\!\in\!\lbrace 1,\dots,n\rbrace,\,j\!\in\!\lbrace 1,\dots,m\rbrace\\
%\text{s.t. }&a_{1p}\,(\overline{\beta_p}\!-\!p)+a_{1q}\,(\overline{\beta_q}\!-\!q)\!+\!\overline{\alpha}\,b_1\!-\!\sum_{j=1}^m\sigma_{1j}\,g_j\in\Sigma[p,q]\notag\\
%&\qquad\qquad \vdots\notag\\
%&a_{np}\,(\overline{\beta_p}\!-\!p)+a_{nq}\,(\overline{\beta_q}\!-\!q)\!+\!\overline{\alpha}\,b_n\!-\!\sum_{j=1}^m\sigma_{nj}\,g_j\in\Sigma[p,q]\notag
%}
Note that the SOS conditions in the above SOS problem are affine in the decision variables ($\overline{\alpha},\,\overline{\beta_p},\,\overline{\beta_q},\,\sigma_{ij}^k$) and hence can be solved directly. Note that the value of $\overline{\alpha}$ represents how \textit{large} the homothet is, and therefore by solving for the minimal $\overline{\alpha}$ we ensure that the \textit{outer approximation} is the tightest.

\begin{remark}
The problem formulated in \eqref{E:outer} can be used to compute the outer approximation for both the continuous flexibility domains as well as the discrete flexibility domains of the type shown in Fig.\ref{F:hvac} (and described in Example\,\ref{Ex:hvac}).
\end{remark}

%---------------------------------------------------------------
\subsection{Constructing Inner Approximation}
%---------------------------------------------------------------

Solving for the \textit{inner approximation} is not as straightforward. Given a scaling factor $\underline{\alpha}$ and a translational vector $\underline{\beta}$\,, checking whether the homothet  $\mathcal{H}\!\left[\underline{\alpha},\underline{\beta};\mathcal{F}^0\right]$ is included completely inside the flexibility domain $\mathcal{F}$ can be done by solving the following feasibility problem
\mysubeq{}{
&\text{given}~\underline{\alpha}\!>\!0,\,\underline{\beta_p}\!\in\!\mathbb{R},\,\underline{\beta_q}\!\in\!\mathbb{R},\\
&\text{check:}~\forall (k,i)\,g_i^k\geq 0\text{ on }\!\!\left\lbrace\! (p,q)\!\left|\,A\!\begin{bmatrix}
p\\q
\end{bmatrix}\!\leq\!\overline{\alpha}\,b\!+\!A\!\begin{bmatrix}
\overline{\beta_p}\\\overline{\beta_q}
\end{bmatrix}\! \right.\right\rbrace
}
which can be cast into an SOS feasibility problem (for some given $\underline{\alpha}\!>\!0\,,\,\underline{\beta_p}\!\in\!\mathbb{R}$ and $\underline{\beta_q}\!\in\!\mathbb{R}$):
\begin{align}\label{E:inner}
&\text{find}~\sigma_{ij}^k\!\in\!\Sigma[p,q]\,~\forall \left\lbrace\begin{array}{l}i\!\in\!\lbrace 1,\dots,m^k\rbrace,\,\\
j\!\in\!\lbrace 1,\dots,n\rbrace,\,\\
k\!\in\!\lbrace 1,\dots,K\rbrace\end{array}\right.\\
&\text{s.t. }\notag\\
&\forall k\!:\!\left\lbrace\!\!\begin{array}{r}g_1^k\!-\!\sum_{j=1}^n\sigma_{1j}^k(\begin{bmatrix}
a_{jp}~\, a_{jq}
\end{bmatrix}\begin{bmatrix}
\underline{\beta_p}\!-\!p\\
\underline{\beta_q}\!-\!q
\end{bmatrix}\!+\!\underline{\alpha}\,b_j)\in\Sigma[p,q]\\
\vdots\qquad\qquad \qquad\qquad\qquad\\
g_{m^k}^k\!-\!\sum_{j=1}^n\sigma_{m^kj}^k(\begin{bmatrix}
a_{jp}~\, a_{jq}
\end{bmatrix}\begin{bmatrix}
\underline{\beta_p}\!-\!p\\
\underline{\beta_q}\!-\!q
\end{bmatrix}\!+\!\underline{\alpha}\,b_j)\in\Sigma[p,q]\end{array}\right.\notag
\end{align}
Notice that the SOS constraints in \eqref{E:inner} are bilinear in the decision polynomial variables $\sigma_{ij}^k$ and the \textit{homothet} parameters $\underline{\alpha}$ and $\underline{\beta}$. Therefore this cannot be solved directly to find out the \textit{best} inner approximation, and can only be used to check whether a given homothet is an inner approximation or not. In this paper, we propose a heuristic to find the homothet parameters that define the best inner approximation. Note that given a $\underline{\beta}$ the best inner approximation is the largest \textit{homothet} that is contained inside the flexibility domain, which can be solved by running a bisection search for the maximum $\underline{\alpha}$ for which \eqref{E:inner} is feasible. The value of this largest $\underline{\alpha}$ can be further improved if we can identify a direction in which to \textit{translate} the \textit{homothet}, by detecting the edges of the polytope which touch the flexibility domain boundary (i.e. \textit{binding} edges). The \textit{best} inner approximation is computed in a two-stage iterative process as described below.

%----------------------------------
\subsubsection{Step 1} 
%----------------------------------
Find $\underline{\alpha}$ given $\underline{\beta}$\,. Given a $\underline{\beta}$ we can 
run a bisection search for the largest positive scalar $\underline{\alpha}$ that solves \eqref{E:inner}.

%----------------------------------
\subsubsection{Step 2}
%----------------------------------
Update $\underline{\beta}$\,. Once we have found the largest homothet $\mathcal{H}\!\left[\underline{\alpha},\underline{\beta};\mathcal{F}^0\right]$ of the \textit{prototype} polygon, that is contained wholly inside the flexibility domain, we can identify the edges of the \textit{homothet} which lie on the boundary of the flexibility domain. The edges represent the \textit{binding} constraints that limit the expansion of the inner \textit{homothet}. Thus we choose a direction $\hat{b}$ that is a vector sum of all the directions orthogonal to the \textit{binding} edges, and choose a new $\underline{b}\leftarrow \underline{b}+\varepsilon\hat{b}$\,, for some sufficiently small scalar $\varepsilon$\,. Checking which edges are binding can be done by solving appropriate sum-of-squares problems, which we have omitted for brevity.

These two steps are repeated until a convergence in the value of $\underline{\alpha}$ is achieved. 
Let us denote by $\mathcal{H}\!\left[\underline{\alpha^*},\underline{\beta^*};\mathcal{F}^0\right]$ the \textit{largest} homothet contained inside the flexibility domain, such that there exists no $\underline{\alpha}>\underline{\alpha^*}$ and $\underline{\beta}$ for which $\mathcal{H}\!\left[\underline{\alpha},\underline{\beta};\mathcal{F}^0\right]$ is an inner approximation. Let us assume that $\mathcal{H}\!\left[\underline{\alpha^0},\underline{\beta^0};\mathcal{F}^0\right]$ is an inner approximation for some translational vector $\underline{\beta^0}$ and scaling factor $\underline{\alpha^0}$. Then the success (convergence) of the above two-stage algorithm depends on the following condition:
\begin{assumption}\label{AS:monotonic}
$\underline{\alpha}(\epsilon)$ increases monotonically with $\epsilon\in[0,1]$\,, where $\underline{\alpha}(\epsilon)$ is the largest scaling factor corresponding to the translational vector $\underline{\beta}(\epsilon)=\underline{\beta^0}+\epsilon\left(\underline{\beta^*}-\underline{\beta^0}\right)$\,.
\end{assumption}

Note that the inner approximations do not exist for DERs that have discrete flexibility domains. Hence the above method is only applicable to continuous flexibility domains.

%============================
\section{Quality of Approximation}\label{S:metric}
%============================

In order for the approximation of the flexibility domains to be used in network-level resource optimization, it is important that the approximations are \textit{close} to the actual flexibility domains. In this section, we propose metric(s) to quantify the \textit{closeness} of the approximation. Such metrics could be useful to choose from a set of \textit{prototypes} to approximate the flexibility domain of a given DER. Furthermore, such metrics could be passed up to the network optimizer, along with the approximating \textit{homothets}, to facilitate better resource allocation. In this paper, we discuss two such possible metrics. First one is related to the worst case error in a dispatched point (from network operator) and a feasible solution:
\begin{definition}\label{D:distance}
The distance metric $\pi^d(\overline{\mathcal{F}},\mathcal{F})$ for two sets $\overline{\mathcal{F}}\supset\mathcal{F}$ is defined as the maximal distance between any point in $\overline{\mathcal{F}}\backslash\mathcal{F}$ and its nearest feasible point in $\mathcal{F}$\,.
\end{definition}
%
%Note that, 
\begin{proposition}\label{P:distance}
The distance metric for the outer and inner approximations of the type $\overline{\mathcal{H}}=\mathcal{H}\!\left[\overline{\alpha},\overline{\beta};\mathcal{F}^0\right]$ and $\underline{\mathcal{H}}=\mathcal{H}\!\left[\underline{\alpha},\underline{\beta};\mathcal{F}^0\right]$\,, where $\mathcal{F}^0$ is a polygon, is given by 
\begin{align*}
\pi^d(\overline{\mathcal{H}},\underline{\mathcal{H}})=\max_i \left\lVert (\overline{\alpha}-\underline{\alpha})\,v_i + \overline{\beta}-\underline{\beta}\right\rVert_2
\end{align*}
where $v_i\,\forall i$ is the vector representing the vertices of $\mathcal{F}^0$.
\end{proposition}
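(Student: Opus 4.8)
Write $\overline{\mathcal{H}}=\overline{\alpha}\,\mathcal{F}^0+\overline{\beta}$ and $\underline{\mathcal{H}}=\underline{\alpha}\,\mathcal{F}^0+\underline{\beta}$, let $v_1,v_2,\dots$ be the vertices of the polygon $\mathcal{F}^0$, and abbreviate the claimed value by $M:=\max_i\|(\overline{\alpha}-\underline{\alpha})v_i+\overline{\beta}-\underline{\beta}\|_2$. By construction $\underline{\mathcal{H}}\subseteq\mathcal{F}\subseteq\overline{\mathcal{H}}$, and by Definition~\ref{D:distance} $\pi^d(\overline{\mathcal{H}},\underline{\mathcal{H}})=\sup_{x\in\overline{\mathcal{H}}}\mathrm{dist}(x,\underline{\mathcal{H}})$ (points of $\underline{\mathcal{H}}$ have distance zero, so dropping them as the definition does changes nothing). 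The plan is to prove $\pi^d(\overline{\mathcal{H}},\underline{\mathcal{H}})\le M$ and $\pi^d(\overline{\mathcal{H}},\underline{\mathcal{H}})\ge M$ separately.

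For the upper bound I would take an arbitrary $x\in\overline{\mathcal{H}}$, write $x=\overline{\alpha}y+\overline{\beta}$ with $y\in\mathcal{F}^0$, and bound the distance using the feasible point $\underline{\alpha}y+\underline{\beta}\in\underline{\mathcal{H}}$, giving $\mathrm{dist}(x,\underline{\mathcal{H}})\le\|(\overline{\alpha}-\underline{\alpha})y+\overline{\beta}-\underline{\beta}\|_2$. Since the right-hand side is a convex function of $y$, its maximum over the polygon $\mathcal{F}^0$ is attained at a vertex $v_i$; taking the supremum over $x$ then yields $\pi^d(\overline{\mathcal{H}},\underline{\mathcal{H}})\le M$.

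For the matching lower bound I would exhibit a single point of $\overline{\mathcal{H}}$ realizing distance $M$. Let $i^\ast$ attain the maximum defining $M$, and set $\overline{v}_{i^\ast}=\overline{\alpha}v_{i^\ast}+\overline{\beta}\in\overline{\mathcal{H}}$ and $\underline{v}_{i^\ast}=\underline{\alpha}v_{i^\ast}+\underline{\beta}\in\underline{\mathcal{H}}$; it suffices to show that $\underline{v}_{i^\ast}$ is the \emph{nearest} point of $\underline{\mathcal{H}}$ to $\overline{v}_{i^\ast}$, for then $\mathrm{dist}(\overline{v}_{i^\ast},\underline{\mathcal{H}})=\|\overline{v}_{i^\ast}-\underline{v}_{i^\ast}\|_2=M$. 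The inclusion forces $\underline{\alpha}\le\overline{\alpha}$, and $\underline{\alpha}=\overline{\alpha}$ only when the two homothets coincide (trivial), so assume $\overline{\alpha}>\underline{\alpha}$ and introduce the common homothety centre $p^\ast=(\underline{\beta}-\overline{\beta})/(\overline{\alpha}-\underline{\alpha})$, for which $(\overline{\alpha}-\underline{\alpha})v_i+\overline{\beta}-\underline{\beta}=(\overline{\alpha}-\underline{\alpha})(v_i-p^\ast)$ and in particular $\overline{v}_{i^\ast}-\underline{v}_{i^\ast}=(\overline{\alpha}-\underline{\alpha})(v_{i^\ast}-p^\ast)$. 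Then I would establish: (i) $p^\ast\in\mathcal{F}^0$ --- the homothety $x\mapsto(\underline{\alpha}/\overline{\alpha})(x-\overline{\beta})+\underline{\beta}$ carries $\overline{\mathcal{H}}$ onto $\underline{\mathcal{H}}$, has ratio $\underline{\alpha}/\overline{\alpha}<1$, and maps $\overline{\mathcal{H}}$ into itself (since $\underline{\mathcal{H}}\subseteq\overline{\mathcal{H}}$), so its unique fixed point $\overline{\alpha}p^\ast+\overline{\beta}$ lies in $\overline{\mathcal{H}}$, i.e.\ $p^\ast\in\mathcal{F}^0$; and (ii) $v_{i^\ast}-p^\ast\in N_{\mathcal{F}^0}(v_{i^\ast})$ --- since $i^\ast$ maximizes the convex function $y\mapsto\|y-p^\ast\|_2$ over the vertices, $v_{i^\ast}$ is the point of all of $\mathcal{F}^0$ farthest from $p^\ast$, so for every $y\in\mathcal{F}^0$ one has $\langle v_{i^\ast}-p^\ast,\,y-v_{i^\ast}\rangle=\langle v_{i^\ast}-p^\ast,\,y-p^\ast\rangle-\|v_{i^\ast}-p^\ast\|_2^2\le\|v_{i^\ast}-p^\ast\|_2\|y-p^\ast\|_2-\|v_{i^\ast}-p^\ast\|_2^2\le0$ by Cauchy--Schwarz. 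Because normal cones are invariant under positive scaling and translation, $N_{\underline{\mathcal{H}}}(\underline{v}_{i^\ast})=N_{\mathcal{F}^0}(v_{i^\ast})$, hence $\overline{v}_{i^\ast}-\underline{v}_{i^\ast}\in N_{\underline{\mathcal{H}}}(\underline{v}_{i^\ast})$ --- exactly the optimality condition certifying that $\underline{v}_{i^\ast}$ is the projection of $\overline{v}_{i^\ast}$ onto $\underline{\mathcal{H}}$. This gives $\pi^d(\overline{\mathcal{H}},\underline{\mathcal{H}})\ge\mathrm{dist}(\overline{v}_{i^\ast},\underline{\mathcal{H}})=M$, and together with the upper bound the proof is complete.

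I expect the lower bound to be the main obstacle, and within it the choice of \emph{which} point of $\overline{\mathcal{H}}$ to test. A generic vertex of the outer homothet need not have the corresponding vertex of $\underline{\mathcal{H}}$ as its closest point (this already fails for a thin, obtuse triangular prototype), so there is no vertex-by-vertex argument; the estimate closes only at the \emph{maximizing} vertex $v_{i^\ast}$, whose being the global farthest point of $\mathcal{F}^0$ from the homothety centre $p^\ast$ is precisely what makes the Cauchy--Schwarz step sharp enough to land $v_{i^\ast}-p^\ast$ in $N_{\mathcal{F}^0}(v_{i^\ast})$. A secondary point needing care is deducing $p^\ast\in\mathcal{F}^0$ from $\underline{\mathcal{H}}\subseteq\overline{\mathcal{H}}$ alone, for which the contraction/fixed-point observation is the cleanest route.
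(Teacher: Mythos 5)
Your proof is correct, and it is substantially more complete than the one in the paper. The paper's argument is a two-sentence sketch: homothets of a polygon have pairwise parallel edges, hence the maximal distance between the two polygons ``occurs at the vertices.'' That observation really only justifies the easy half of the identity, your upper bound: writing $x=\overline{\alpha}y+\overline{\beta}$ and using the feasible competitor $\underline{\alpha}y+\underline{\beta}$, the bound $\mathrm{dist}(x,\underline{\mathcal{H}})\le\lVert(\overline{\alpha}-\underline{\alpha})y+\overline{\beta}-\underline{\beta}\rVert_2$ is the norm of an affine map of $y$, so its maximum over the polygon sits at a vertex, giving $\pi^d\le M$. What the paper leaves entirely implicit --- and what you correctly identify as the real content --- is that the bound is attained: that $\lVert\overline{v}_{i^\ast}-\underline{v}_{i^\ast}\rVert_2$ is genuinely the distance from $\overline{v}_{i^\ast}$ to $\underline{\mathcal{H}}$, not merely the distance to one particular feasible point. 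Your observation that this fails vertex-by-vertex for a general (e.g.\ obtuse triangular) prototype, and holds at the maximizing vertex only because $v_{i^\ast}$ is the farthest point of $\mathcal{F}^0$ from the homothety centre $p^\ast$ --- so that the Cauchy--Schwarz step places $v_{i^\ast}-p^\ast$ in the normal cone $N_{\mathcal{F}^0}(v_{i^\ast})$ and certifies $\underline{v}_{i^\ast}$ as the projection of $\overline{v}_{i^\ast}$ onto $\underline{\mathcal{H}}$ --- is exactly the missing step, as is the fixed-point argument showing $p^\ast\in\mathcal{F}^0$ from the inclusion $\underline{\mathcal{H}}\subseteq\overline{\mathcal{H}}$ alone. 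The only hypotheses you use beyond the statement are that $\mathcal{F}^0$ is a \emph{convex}, full-dimensional polygon, which is consistent with the paper's standing assumption on prototypes. In short: same geometric intuition as the paper, but you supply the lower-bound half that the paper omits, and your argument would survive prototypes for which the naive ``corresponding vertices are nearest points'' reading of the paper's proof is false.
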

\begin{proof}
Since homothetic transformation include only scaling and translation, the edges in both the inner and outer homothets are parallel to each other. Therefore the maximal distance between the two polygons occur at the respective vertices.\hfill\qed
\end{proof}
The second metric is related to the likelihood of a dispatched point lying inside the actual flexibility domain:
\begin{definition}\label{D:area}
The area metric $\pi^a(\overline{\mathcal{F}},\mathcal{F})$ for the outer approximated and actual flexibility domains, $\overline{\mathcal{F}}\supset\mathcal{F}$\,, is defined as the likelihood that an optimally dispatched point in $\overline{\mathcal{F}}$ also lies inside $\mathcal{F}$\,.
\end{definition}

To compute this metric one needs to assign a probability of each point in the approximated domains being dispatched by the network operator. In absence of any prior knowledge of the probability, one can start by assuming that each point in the approximated domain is \textit{equally likely} to be dispatched. Under such an assumption, we can compute the area metric as follows:
\begin{proposition}\label{P:area}
The area metric for the outer and inner approximations of the type $\overline{\mathcal{H}}=\mathcal{H}\!\left[\overline{\alpha},\overline{\beta};\mathcal{F}^0\right]$ and $\underline{\mathcal{H}}=\mathcal{H}\!\left[\underline{\alpha},\underline{\beta};\mathcal{F}^0\right]$\,, is given by $\pi^a(\overline{\mathcal{H}},\underline{\mathcal{H}})=\left({\underline{\alpha}}/{\overline{\alpha}}\right)^2$\,.
\end{proposition}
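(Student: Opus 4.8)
The plan is to reduce the probabilistic statement to a ratio of areas and then exploit the scaling behaviour of Lebesgue measure under a homothety in $\mathbb{R}^2$. First I would invoke the modelling assumption stated just before the proposition: every point of the outer approximation $\overline{\mathcal{H}}$ is equally likely to be the dispatched set-point, i.e. the dispatched point is uniformly distributed on $\overline{\mathcal{H}}$. Under this assumption, the likelihood that a dispatched point also lies in the guaranteed-feasible region — which, since $\underline{\mathcal{H}}\subseteq\mathcal{F}$, we represent by $\underline{\mathcal{H}}$ — is exactly $\mathrm{area}(\underline{\mathcal{H}})/\mathrm{area}(\overline{\mathcal{H}})$, directly from the definition of the uniform distribution. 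Because $\underline{\mathcal{H}}\subseteq\mathcal{F}\subseteq\overline{\mathcal{H}}$, this quantity is in fact a conservative lower bound on the true likelihood $\mathrm{area}(\mathcal{F})/\mathrm{area}(\overline{\mathcal{H}})$, consistent with the intent of Definition~\ref{D:area}.

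Next I would compute the area of a homothet of the polygon $\mathcal{F}^0$. For any scaling factor $\alpha>0$ and translation $\beta\in\mathbb{R}^2$, the affine map $x\mapsto\alpha x+\beta$ has Jacobian determinant $\alpha^2$, so $\mathrm{area}\big(\mathcal{H}[\alpha,\beta;\mathcal{F}^0]\big)=\alpha^2\,\mathrm{area}(\mathcal{F}^0)$; the translation is measure-preserving and contributes nothing. Applying this to both $\overline{\mathcal{H}}$ and $\underline{\mathcal{H}}$ and forming the ratio, the common factor $\mathrm{area}(\mathcal{F}^0)$ cancels, leaving $\pi^a(\overline{\mathcal{H}},\underline{\mathcal{H}})=\underline{\alpha}^2/\overline{\alpha}^2=\left(\underline{\alpha}/\overline{\alpha}\right)^2$, as claimed. (Implicitly this uses $\mathrm{area}(\mathcal{F}^0)>0$, which holds since $\mathcal{F}^0$ is a full-dimensional convex polygon.)

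There is no serious obstacle here: once the probabilistic definition is unwound, the argument is a one-line measure computation. The only point deserving a sentence of care is the identification of the ``success'' event with membership in $\underline{\mathcal{H}}$ rather than in $\mathcal{F}$ itself — making explicit that the metric is being evaluated on the two homothets actually computed, so the resulting value is a worst-case (guaranteed) likelihood rather than the exact one. Everything else — uniformity, the $\alpha^2$ area scaling in the plane, and the cancellation of $\mathrm{area}(\mathcal{F}^0)$ — is routine.
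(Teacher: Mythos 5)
Your proof is correct and follows essentially the same route as the paper's: the key fact is that a homothet with scaling factor $\alpha$ has area $\alpha^2$ times that of the prototype, and the ratio of areas gives the uniform-distribution likelihood. Your additional remarks (the Jacobian computation and the observation that using $\underline{\mathcal{H}}$ in place of $\mathcal{F}$ makes the metric a conservative lower bound) simply make explicit what the paper leaves implicit.
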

\begin{proof}
Note that the area under a homothet $\mathcal{H}\!\left[{\alpha},{\beta};\mathcal{F}^0\right]$ is equal to $\alpha^2$ times the area under the prototype $\mathcal{F}^0$\,. The rest follows directly.\hfill\qed
\end{proof}
Note that the larger the value of the area metric, the better are the domain approximations, while for the distance metric the smaller value implies better approximation. These metrics are applicable directly to continuous flexibility domains. For discrete domains, alternative methods need to be devised to compute the metrics (not covered in this article).

%============================
\section{Numerical Results}\label{S:result}
%============================

\begin{figure*}[thpb]
\centering
\captionsetup{justification=centering}
\subfigure[battery (DER 1)]{
\includegraphics[scale=0.27]{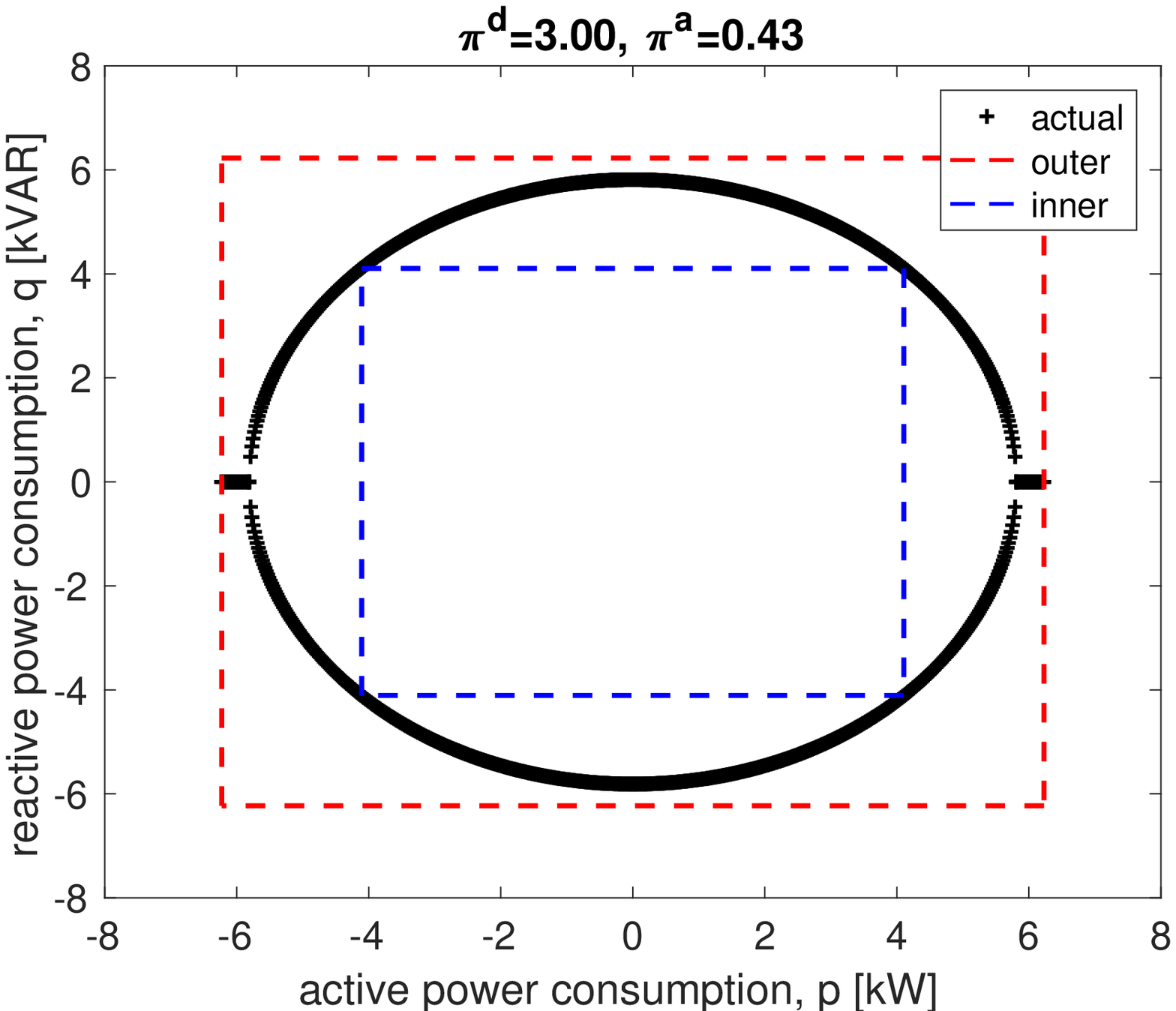}\label{F:box_1}
}
\hspace{0.1in}
\subfigure[solar PV inverter (DER 2)]{
\includegraphics[scale=0.27]{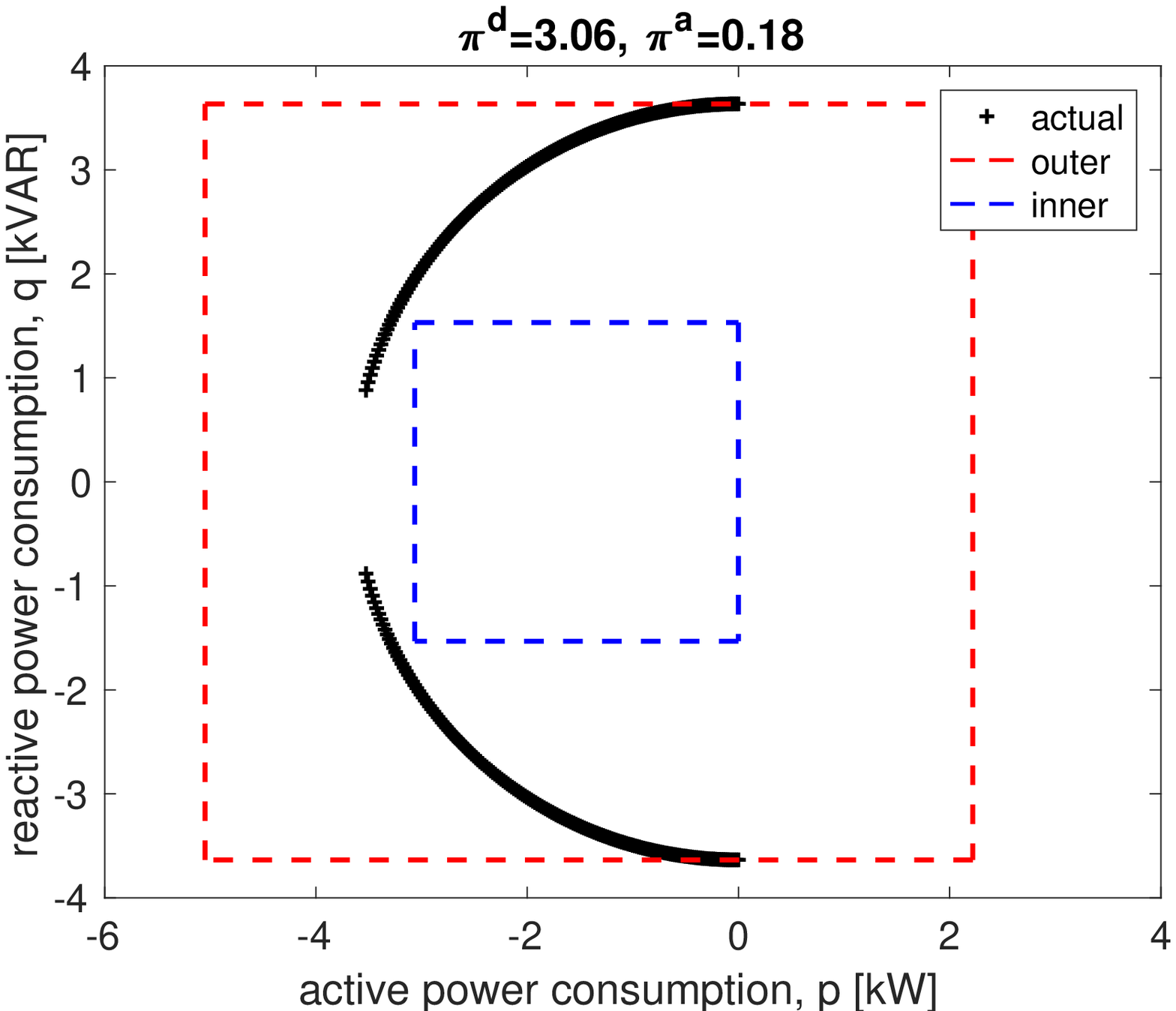}\label{F:box_2}
}
\hspace{0.1in}
\subfigure[wind inverter (DER 3)]{
\includegraphics[scale=0.27]{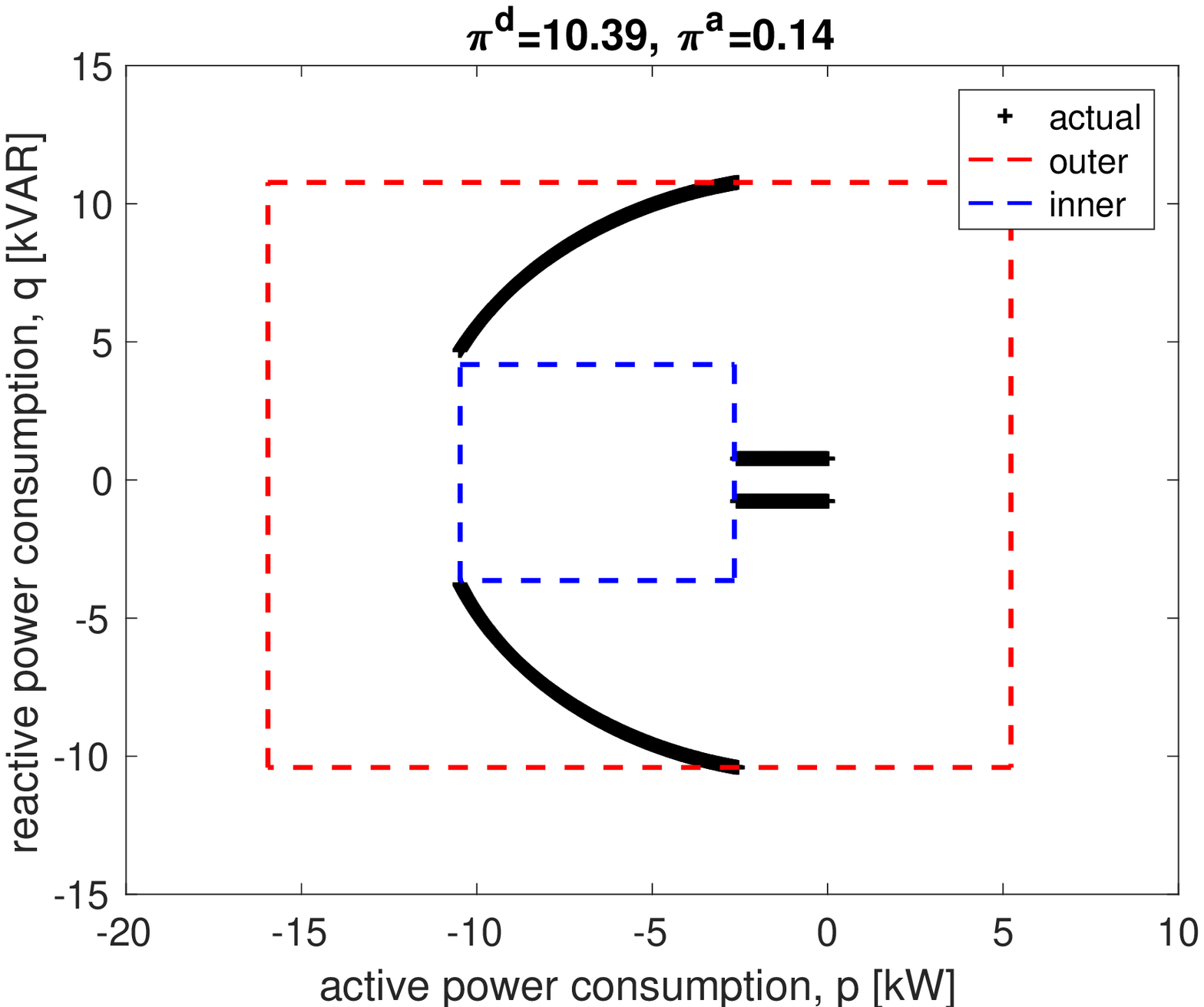}\label{F:box_3}
}
\hspace{0.1in}
\subfigure[battery (DER 4)]{
\includegraphics[scale=0.27]{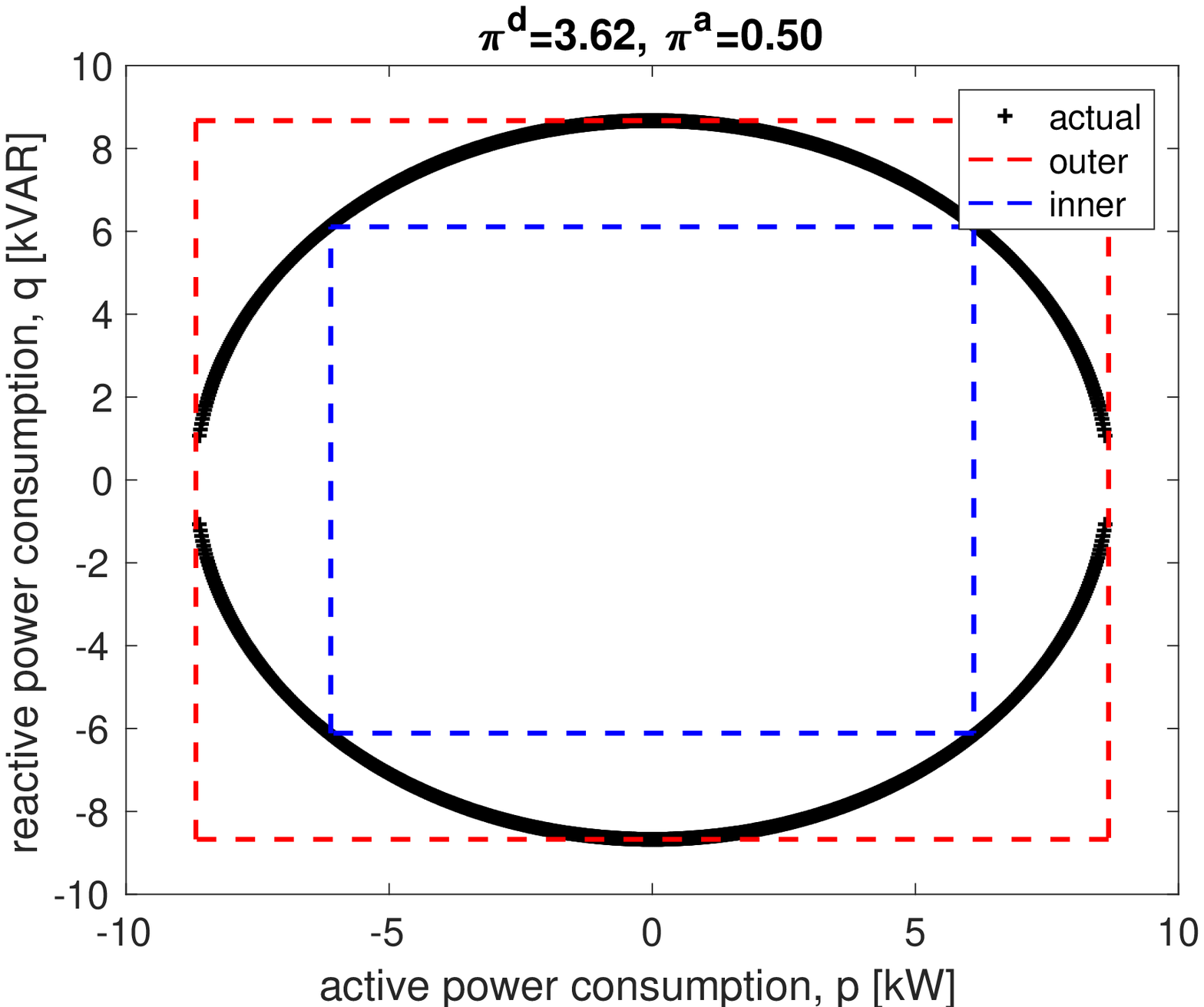}\label{F:box_4}
}
\hspace{0.1in}
\subfigure[wind inverter (DER 5)]{
\includegraphics[scale=0.27]{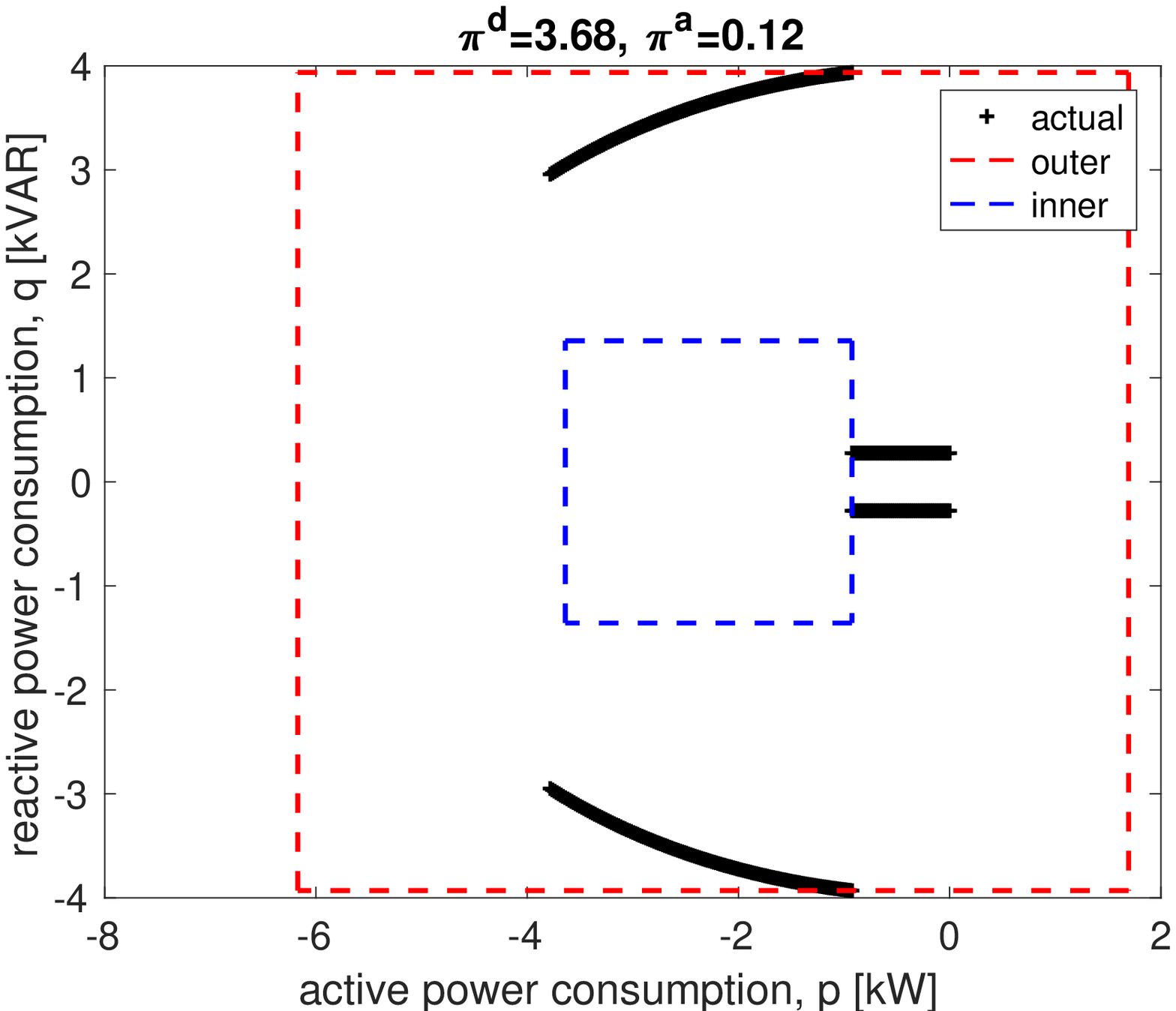}\label{F:box_5}
}
\hspace{0.1in}
\subfigure[aggregated flexibility]{
\includegraphics[scale=0.27]{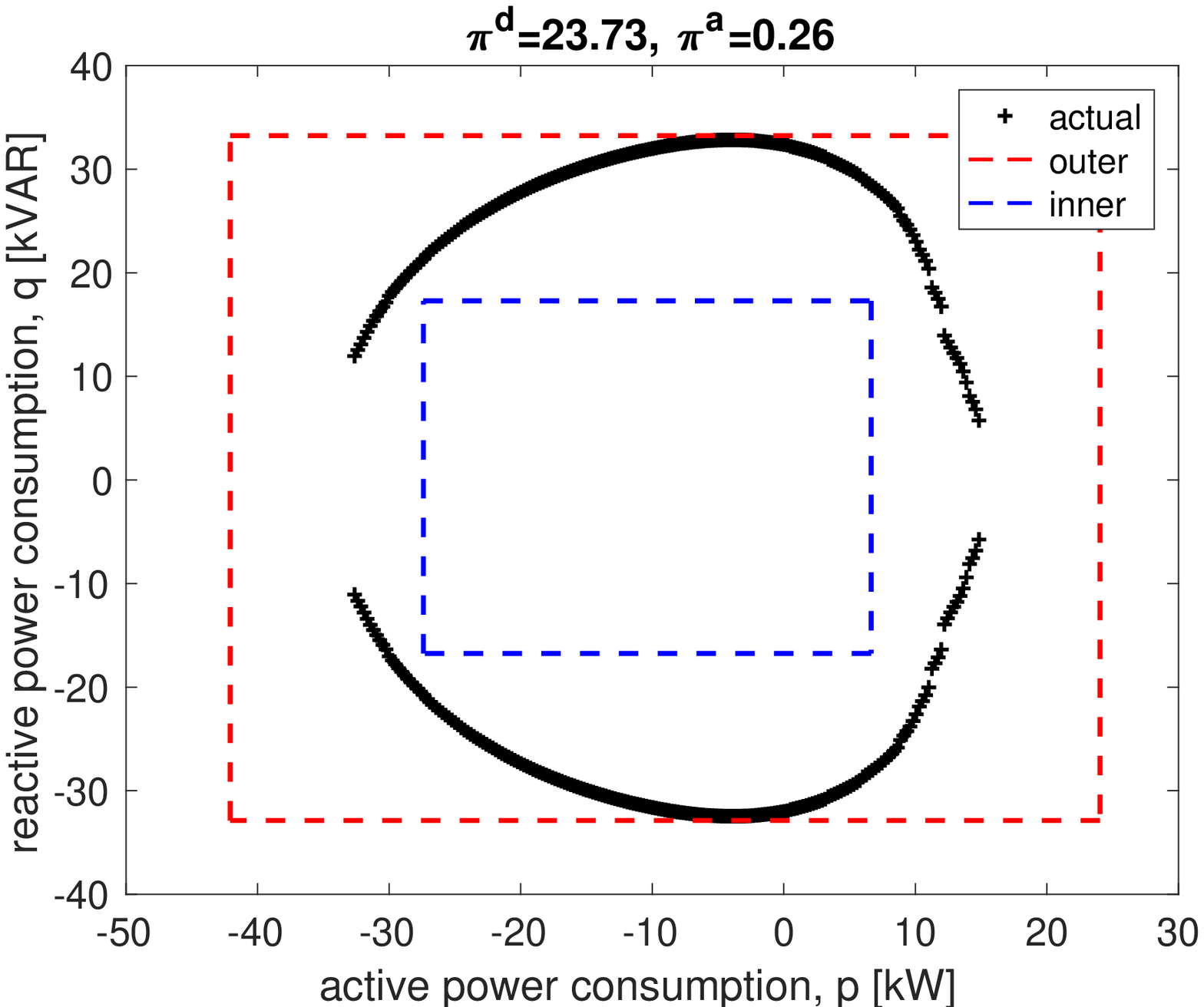}\label{F:box_aggr}
}
\caption[Optional caption for list of figures]{Outer and inner approximations of a collection of PV inverters, wind inverters and batteries, using unit square as a prototype.}
\label{F:box}
\end{figure*}
In this section, we present some numerical results to demonstrate the aggregation of flexibility domains of different types of DERs. Fig.\,\ref{F:box} shows the plots of outer and inner approximations of a collection of five DERs (consisting of PV inverters, wind inverters and batteries), using unit square as a prototype, including the approximation of their Minkowski sum. The values of the distance metric and the area metric for all these approximations are also computed and shown on the plots. 
\begin{figure*}[thpb]
\centering
\captionsetup{justification=centering}
\subfigure[solar PV inverter (DER 2)]{
\includegraphics[scale=0.27]{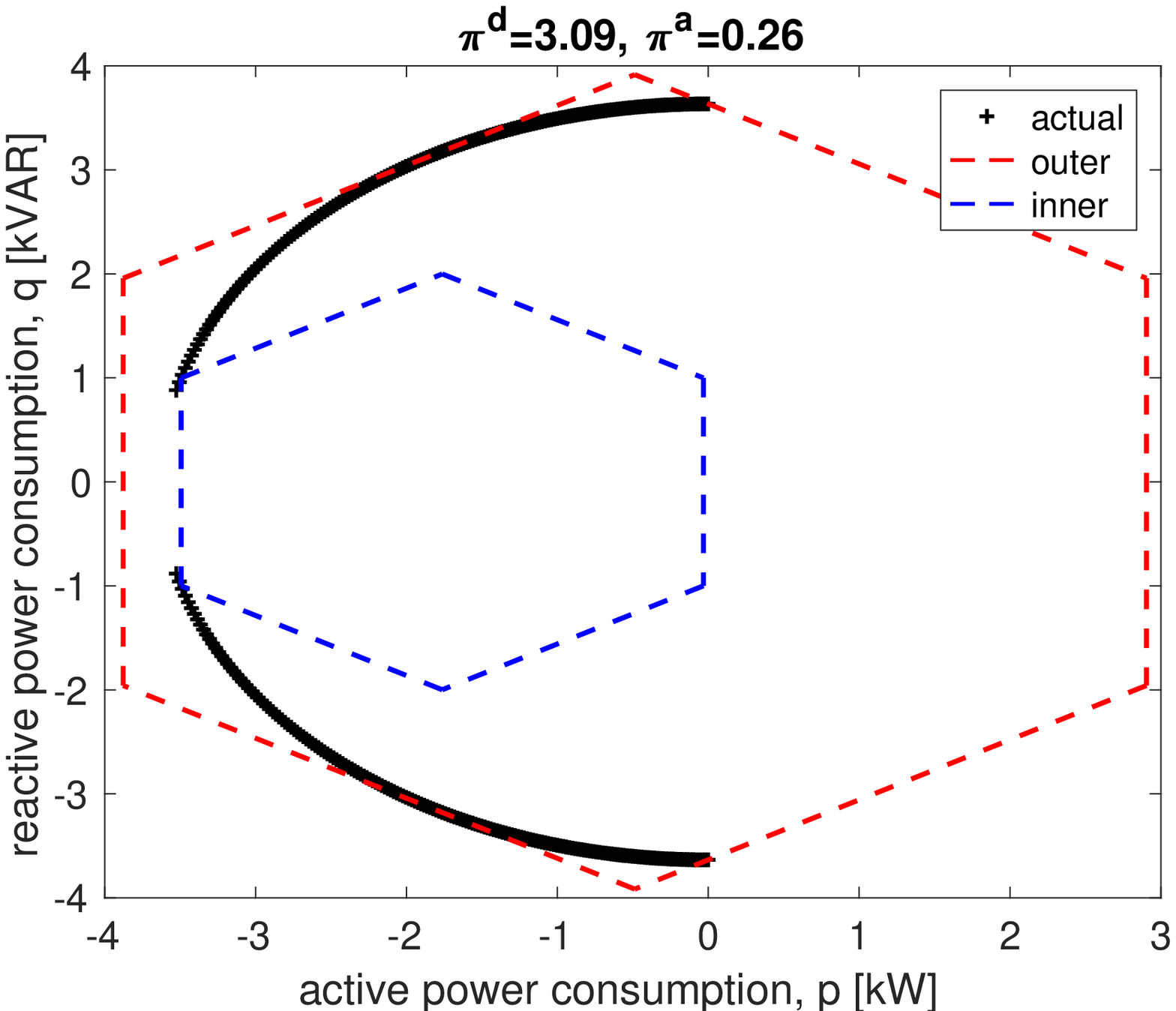}\label{F:hex_2}
}
\hspace{0.1in}
\subfigure[wind inverter (DER 5)]{
\includegraphics[scale=0.27]{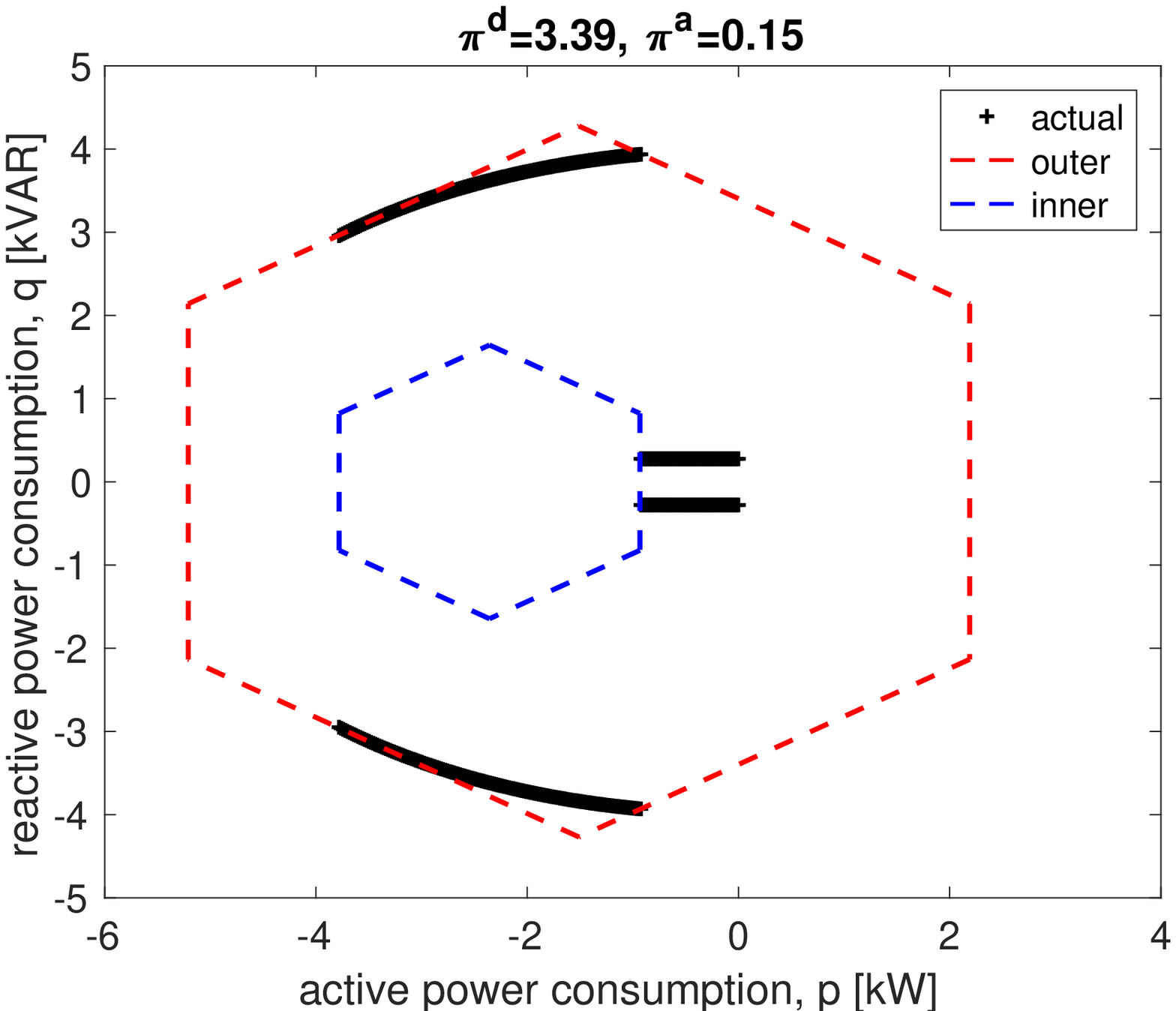}\label{F:hex_5}
}
\hspace{0.1in}
\subfigure[aggregated flexibility]{
\includegraphics[scale=0.27]{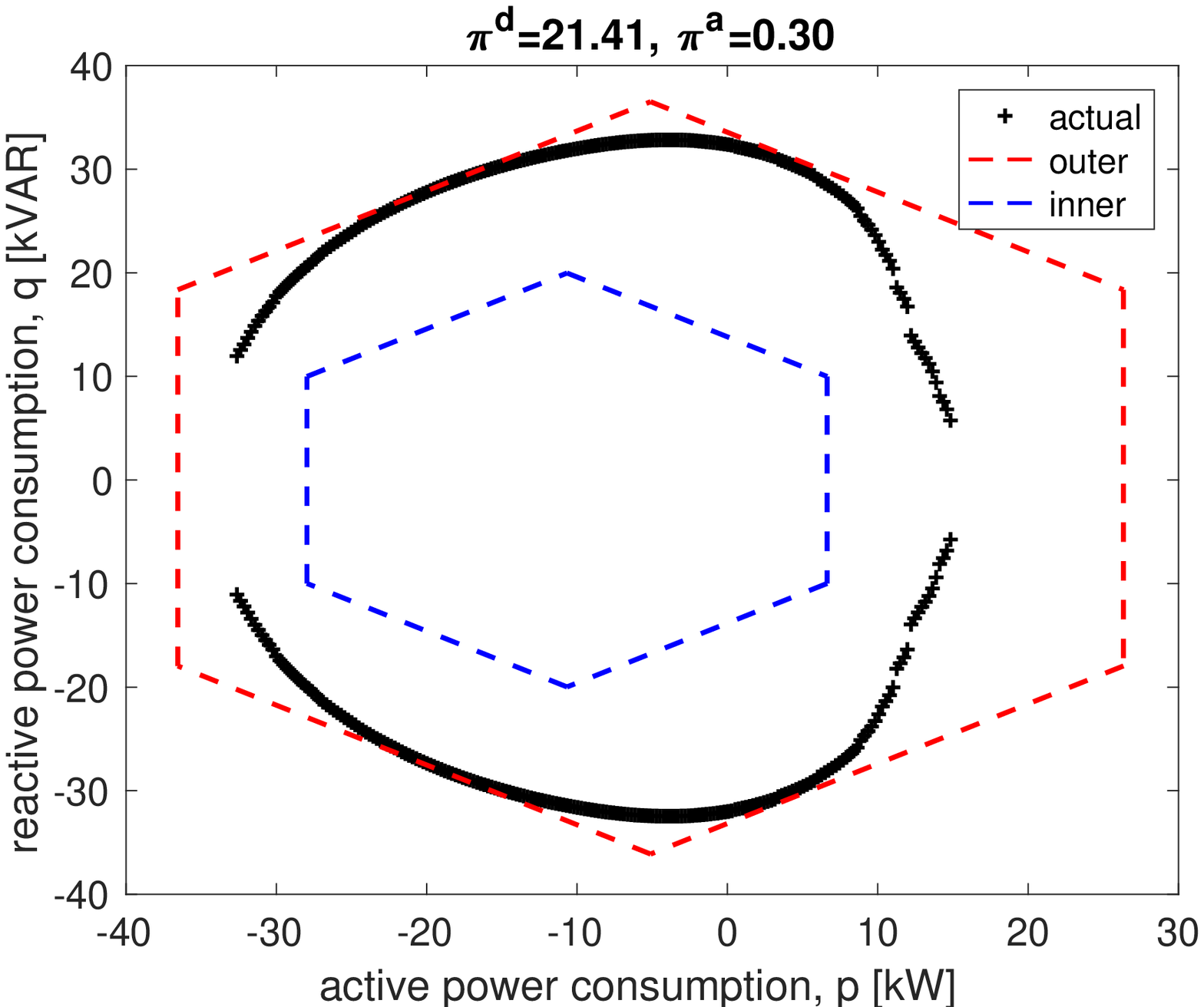}\label{F:hex_aggr}
}
\caption[Optional caption for list of figures]{Outer and inner approximations of the same set of DERs, using regular unit hexagon as a prototype.}
\label{F:hex}
\end{figure*}
Fig.\,\ref{F:hex} shows the approximations using a regular unit hexagon as the prototype, for the same set of devices (only a couple of them shown on the figure). In this particular example, the approximations using a regular hexagon as the prototype turns out to be better than the approximations using unit square as the prototype, according to both the area metric and the distance metric. For example, the value of the area metric of the approximation of the aggregated flexibility turns out to be higher for the regular hexagon prototype ($\pi^a=0.30$) than for the square prototype ($\pi^a=0.26$), while the value of the distance metric is lower with regular hexagon as the prototype ($\pi^d \approx 21.4$) than with a square prototype ($\pi^d \approx 23.7$).

\begin{figure*}[thpb]
\centering
\captionsetup{justification=centering}
\subfigure[individual air-conditioner]{
\includegraphics[scale=0.27]{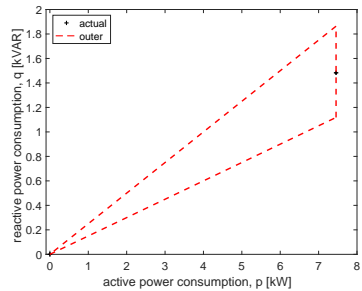}\label{F:tri_1}
}
\hspace{0.1in}
\subfigure[collection of air-conditioners]{
\includegraphics[scale=0.27]{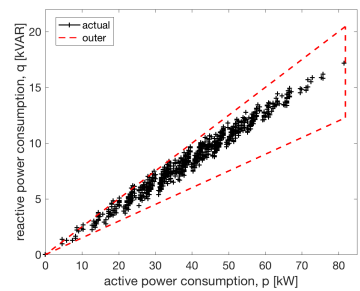}\label{F:tri_aggr}
}
\caption[Optional caption for list of figures]{Outer approximations of a collection of ten residential air-conditioners with discrete flexibility domains, using a triangle as a prototype.}
\label{F:tri}
\end{figure*}

Finally Fig.\,\ref{F:tri} shows the results from approximating the aggregated flexibility of a collection of discrete loads (residential air-conditioners) via a triangle as a prototype. While the approximation of an individual DER results in a very sparse approximated flexibility domain, the approximation of the aggregated flexibility domain tends to be much denser. While a specific quantification of the accuracy (or, quality) of the approximation for discrete DERs is not provided in this article, Fig.\,\ref{F:tri} shows qualitatively how the larger population size of an ensemble of switching loads can lead to better approximation of the aggregated flexibility.

%============================
\section{Conclusions}\label{S:concl}
%============================
We envision a distribution system optimization framework in which thousands of these DERs are scheduled to operate the system efficiently. Heterogeneity of the DERs, manifested in the different forms of their flexibility domains, pose a serious challenge to the scalability of such an optimization problem. Furthermore, any such optimization framework needs to be able to accommodate a plug-and-play approach whereby any new DER can be easily integrated into the system. In this work, we presented a geometric approach to \textit{homogenizing} the flexibility domains of different types of DERs via \textit{homothets} of some \textit{prototype} domain. We further looked into the choice of the prototype domain, and the choice of the parameters of the homothets, such that the outer (and inner, when applicable) approximations of the flexibility domains are \textit{good}. Simulation results are provided to illustrate the approach. Future efforts will concentrate on extending this framework to model the dynamic evolution of the flexibility, as well as modeling the uncertainties in the flexibility of DERs.

\section*{Acknowledgment}

The authors would like to thank Dr. Andrey Bernstein of National Renewable Energy Laboratory, USA, for his helpful remarks and discussions on this work.

% trigger a \newpage just before the given reference
% number - used to balance the columns on the last page
% adjust value as needed - may need to be readjusted if
% the document is modified later
\IEEEtriggeratref{8}
% The 'triggered' command can be changed if desired:
%\IEEEtriggercmd{\enlargethispage{-5in}}

% references section

% can use a bibliography generated by BibTeX as a .bbl file
% BibTeX documentation can be easily obtained at:
% http://www.ctan.org/tex-archive/biblio/bibtex/contrib/doc/
% The IEEEtran BibTeX style support page is at:
% http://www.michaelshell.org/tex/ieeetran/bibtex/
% argument is your BibTeX string definitions and bibliography database(s)

\bibliographystyle{IEEEtran}
\bibliography{IEEEabrv,RefList,references,MyReferences}

% <OR> manually copy in the resultant .bbl file
% set second argument of \begin to the number of references
% (used to reserve space for the reference number labels box)

%\begin{thebibliography}{1}
%\bibitem{Shell}
%M.~Shell, \emph{How to Use the IEEEtran Latex Class}, Latex Archive Contents, \verb+http://www.ieee.org/conferences_events/+ \verb+conferences/publishing/templates.htm+
%
%\bibitem{IEEEhowto:kopka}
%H.~Kopka and P.~W. Daly, \emph{A Guide to \LaTeX}, 3rd~ed.\hskip 1em plus
%  0.5em minus 0.4em\relax Harlow, England: Addison-Wesley, 1999.
%\end{thebibliography}

% that's all folks
\end{document}